\title{Discretizations of isometries}
\author{Pierre-Antoine Guihéneuf}
\address{Universit\'e Paris-Sud \\ Universidade Federal Fluminense, Niterói}
\email{pierre-antoine.guiheneuf@math.u-psud.fr}
\subjclass{52C23, 37M05, 37C20}
\newtheorem{lemme}{Lemma}
\newtheorem{theoreme}[lemme]{Theorem}
\newtheorem{prop}[lemme]{Proposition}
\newtheorem{theorem}{Theorem}
\theoremstyle{definition}
\newtheorem{definition}[lemme]{Definition}
\newtheorem{notation}[lemme]{Notation}
\theoremstyle{remark}
\newtheorem{rem}[lemme]{Remark}
\newcommand{\N}{\mathbf{N}}
\newcommand{\R}{\mathbf{R}}
\newcommand{\Q}{\mathbf{Q}}
\newcommand{\Z}{\mathbf{Z}}
\newcommand{\varep}{\varepsilon}
\newcommand{\Leb}{\operatorname{Leb}}
\newcommand{\Prb}{\mathcal{P}}
\newcommand{\card}{\operatorname{Card}}
\newcommand{\dist}{\operatorname{dist}}
\newcommand{\1}{\mathbf 1}
\newenvironment{abstracts}{%
  \ifx\maketitle\relax
    \ClassWarning{\@classname}{Abstract should precede
      \protect\maketitle\space in AMS document classes; reported}%
  \fi
  \global\setbox\abstractbox=\vtop \bgroup
    \normalfont\Small
    \list{}{\labelwidth\z@
      \leftmargin3pc \rightmargin\leftmargin
      \listparindent\normalparindent \itemindent\z@
      \parsep\z@ \@plus\p@
      
      \itemsep\bigskipamount
    }%
}{%
  \endlist\egroup
  \ifx\@setabstract\relax \@setabstracta \fi
}
\newcommand{\abstractin}[1]{%
  \otherlanguage{#1}%
  \item[\hskip\labelsep\scshape\abstractname.]%
}
\begin{document}

\begin{abstracts}
\abstractin{english}
This paper deals with the dynamics of discretizations of isometries of $\R^n$, and more precisely the density of the successive images of $\Z^n$ by the discretizations of a generic sequence of isometries. We show that this density tends to 0 as the time goes to infinity. Thus, in general, all the information of a numerical image will be lost by applying many times a naive algorithm of rotation.

\abstractin{french}
On étudie ici la dynamique des discrétisations d'isométries de $\R^n$, et plus précisément la densité des images successives du réseau $\Z^n$ par les discrétisations d'une suite générique d'isométries. On montre que cette densité tend vers 0 quand le temps tend vers l'infini. Ainsi, en général, on va perdre toute l'information contenue dans une image numérique en appliquant plusieurs fois un algorithme naif de rotation.
\end{abstracts}

\selectlanguage{english}

\maketitle

\section[Introduction]{Introduction}

In this paper, we consider the dynamical behaviour of the discretizations of (linear) isometries of a real and finite dimensional vector space $\R^n$. The goal is to understand how it is possible to rotate a numerical image (made of pixels) with the smallest loss of quality as possible.For example, in Figure~\ref{PoincareRot}, we have applied 40 successive random rotations to a $500\times 684$ pixels picture, using a consumer software. These discretized rotations induce a very strong blur in the resulting image, thus a big loss of information.

\begin{figure}[t]
\begin{center}
\begin{minipage}[c]{.3\linewidth}
	\includegraphics[width=\linewidth]{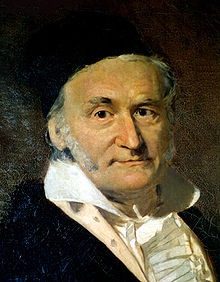}
\end{minipage}\hspace{50pt}
\begin{minipage}[c]{.3\linewidth}
	\includegraphics[width=\linewidth]{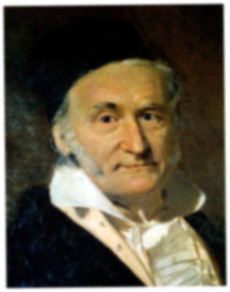}
\end{minipage}
\caption[40 successive rotations of an image]{Original image (left) of size $220\times 282$ and 10 successive random rotations of this image (right), obtained with the software \emph{Gimp} (linear interpolation algorithm).}\label{PoincareRot}
\end{center}	
\end{figure}

Here, we consider the simplest algorithm that can be used to perform a discrete rotation: discretizing the rotation. More precisely, if $x\in\Z^2$ is a integer point (representing a pixel), then the image of $x$ by the discretization of a rotation $R$ will be the integer point which is the closest of $R(x)$. More precisely, in the general case of isometries we will use the following definition of a discretization.

\begin{definition}\label{DefDiscrLin}
We define the projection $p : \R\to\Z$\index{$p$} such that for every $x\in\R$, $p(x)$ is the unique integer $k\in\Z$ such that $k-1/2 < x \le k + 1/2$. This projection induces the map\index{$\pi$}
\[\begin{array}{rrcl}
\pi : & \R^n & \longmapsto & \Z^n\\
 & (x_i)_{1\le i\le n} & \longmapsto & \big(p(x_i)\big)_{1\le i\le n}
\end{array}\]
which is an Euclidean projection on the lattice $\Z^n$. For $P\in O_n(\R)$, we denote by $\widehat P$ the \emph{discretization}\index{$\widehat P$} of $P$, defined by 
\[\begin{array}{rrcl}
\widehat P : & \Z^n & \longrightarrow & \Z^n\\
 & x & \longmapsto & \pi(Px).
\end{array}\]
\end{definition}

We will measure the loss of information induced by the action of discretizing by the \emph{density} of the image set. More precisely, given a sequence $(P_k)_{k\ge 1}$ of linear isometries of $\R^n$, we will look for the density of the set $\Gamma_k = (\widehat{P_k}\circ\cdots\circ\widehat{P_1})(\Z^n)$.

\begin{definition}\label{DefTaux}
For $A_1,\cdots,A_k \in O_n(\R)$, the \emph{rate of injectivity in time $k$} of this sequence is the quantity
\[\tau^k(P_1,\cdots,P_k) = \limsup_{R\to +\infty} \frac{\card \big((\widehat{P_k}\circ\cdots\circ\widehat{P_1})(\Z^n)\cap [B_R]\big)}{\card [B_R]}\in]0,1],\]
where $B_R$ denotes the infinite ball of radius $R$ and $[B_R]$ the set of integral points (i.e. with integer coordinates) inside $B_R$. For an infinite sequence $(P_k)_{k\ge 1}$ of isometries, as the previous quantity is decreasing, we can define the \emph{asymptotic rate of injectivity}\index{$\tau^\infty$}
\[\tau^\infty\big((P_k)_{k\ge 1}\big) = \lim_{k\to +\infty}\tau^k(P_1,\cdots,P_k)\in[0,1].\]
\end{definition}

An example of the sets $\Gamma_k$ for a random draw of isometries $P_m$ is presented on Figure~\ref{ImagesSuitesRotations}. In particular, we observe that the density of these sets seems to decrease when $k$ goes to infinity: the images get whiter and whiter.

\begin{figure}[ht]
\begin{minipage}[c]{.33\linewidth}
	\includegraphics[width=\linewidth, trim = 1.5cm .5cm 1.5cm .5cm,clip]{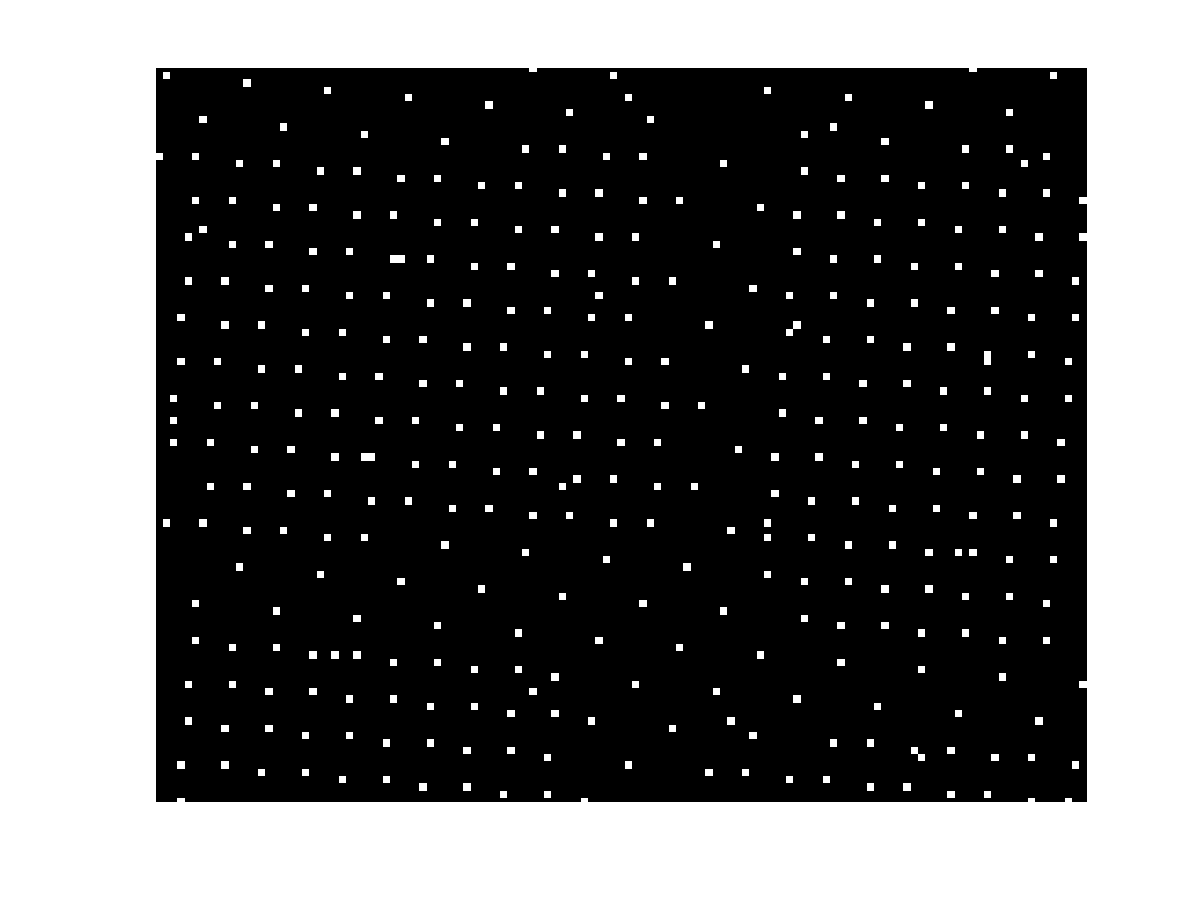}
\end{minipage}\hfill
\begin{minipage}[c]{.33\linewidth}
	\includegraphics[width=\linewidth, trim = 1.5cm .5cm 1.5cm .5cm,clip]{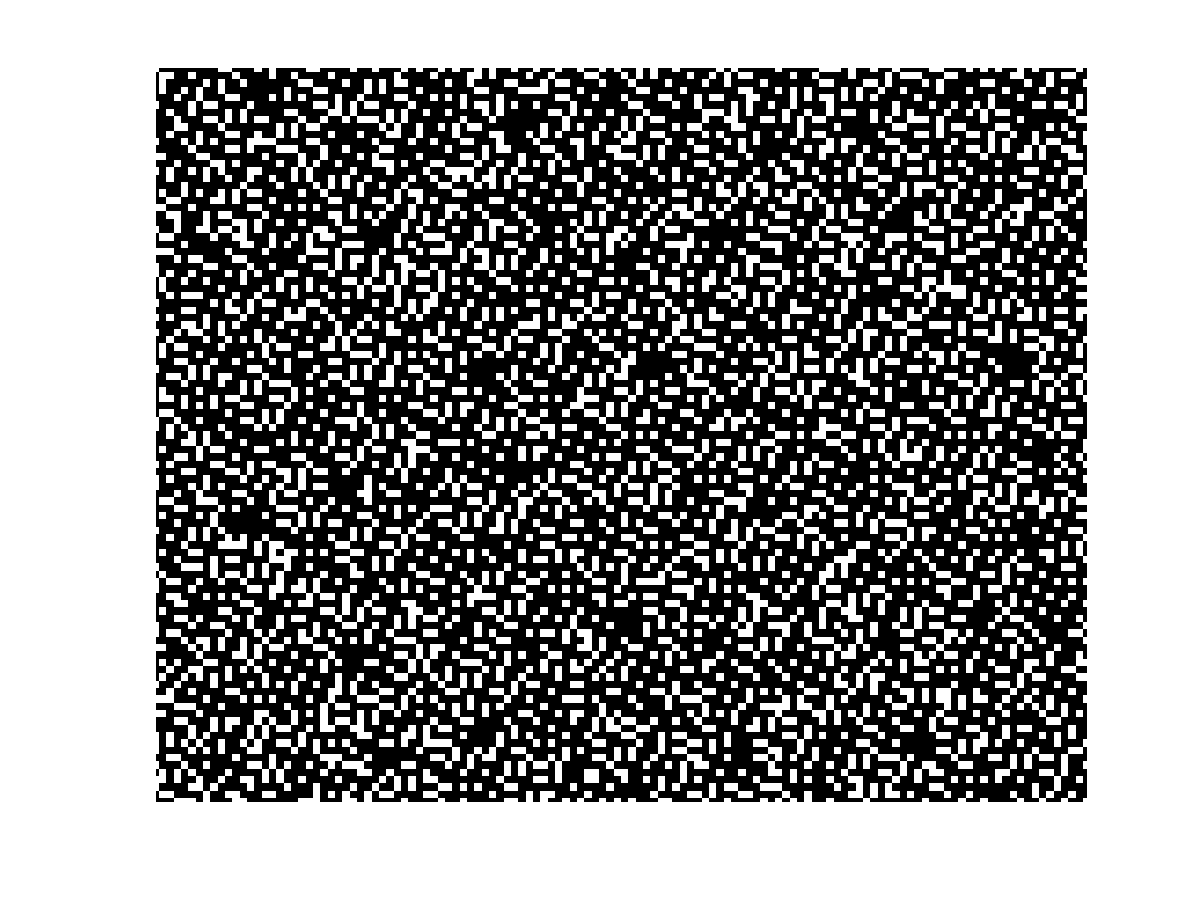}
\end{minipage}\hfill
\begin{minipage}[c]{.33\linewidth}
	\includegraphics[width=\linewidth, trim = 1.5cm .5cm 1.5cm .5cm,clip]{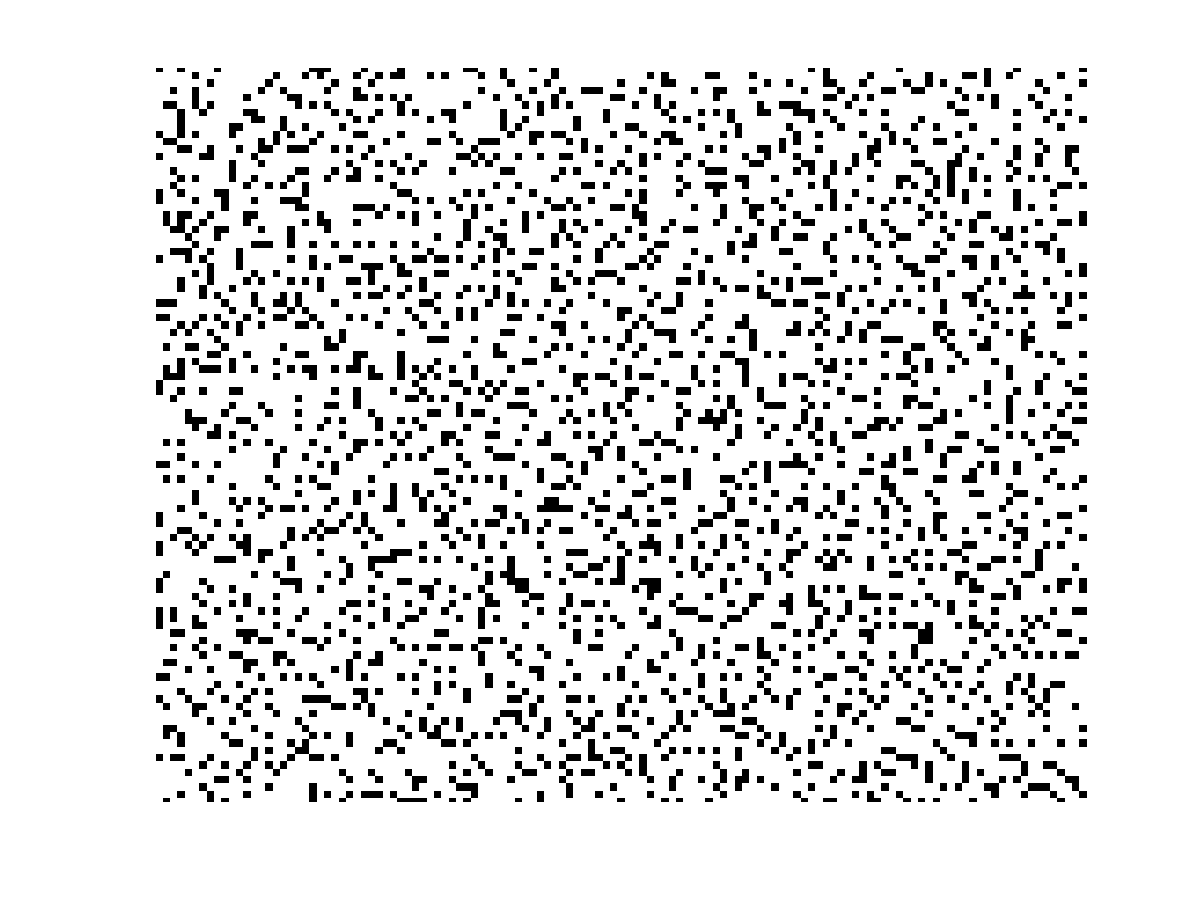}
\end{minipage}
\caption[Successive images of $\Z^2$ by discretizations of random rotations]{Successive images of $\Z^2$ by discretizations of random rotations, a point is black if it belongs to $(\widehat{R_{\theta_k}}\circ\cdots\circ\widehat{R_{\theta_1}})(\Z^2)$, where the $\theta_i$ are chosen uniformly randomly in $[0,2\pi]$. From left to right and top to bottom, $k=2,\, 5,\, 50$.}\label{ImagesSuitesRotations}
\end{figure}

This phenomenon is confirmed when we plot the density of the intersection between these image sets $\Gamma_k$ and a big ball of $\R^n$ (see Figure~\ref{TauxSuiteRotations}): this density seems to tend to 0 as the time $k$ goes to infinity.

\begin{figure}[t]
\begin{center}
\includegraphics[width=.5\linewidth, trim = 1.2cm 1.2cm 1.2cm 1cm, clip]{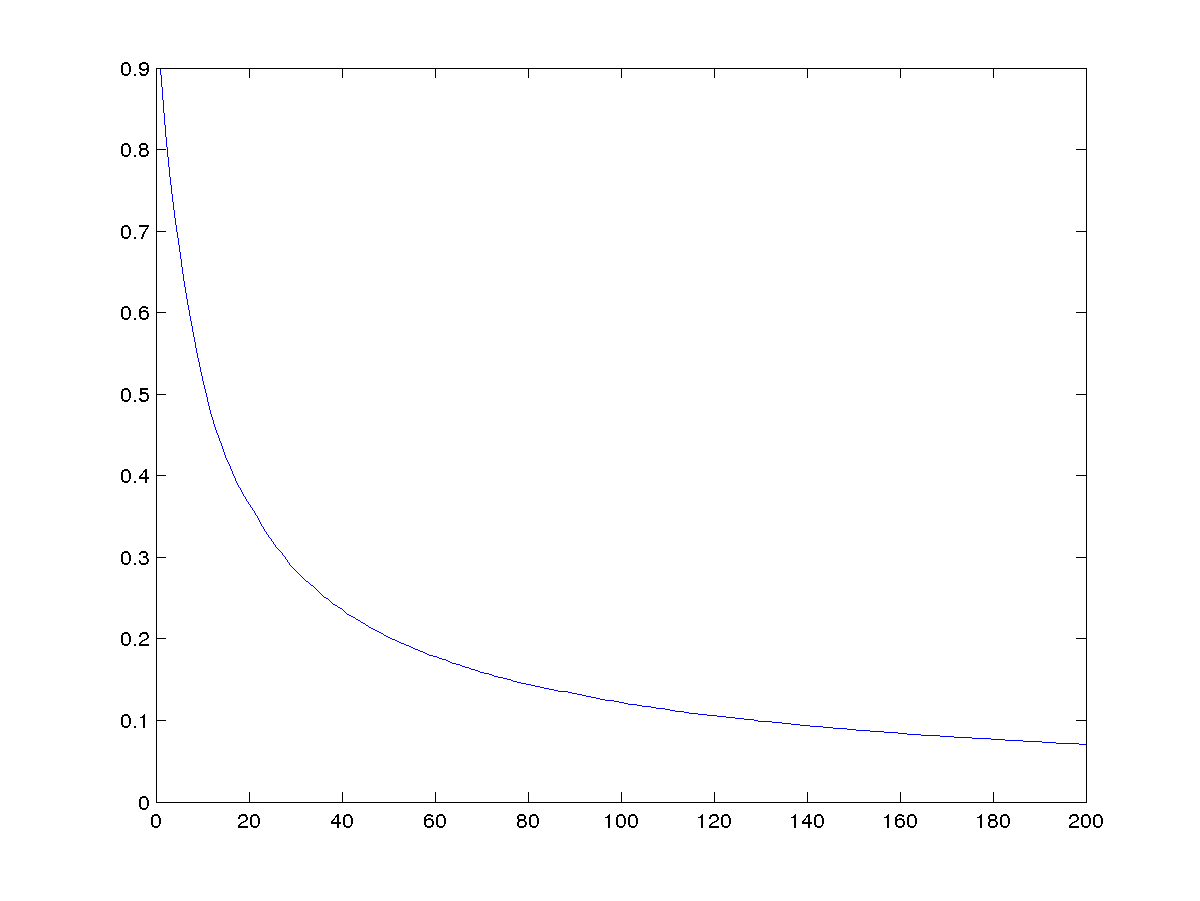}
\caption[Expectation of the rate of injectivity of a random sequences of rotations]{Expectation of the rate of injectivity of a random sequences of rotations: the graphic represents the mean of the rate of injectivity $\tau^k(R_{\theta_k},\cdots,R_{\theta_1})$ depending on $k$, $1\le k\le 200$, for 50 random draws of sequences of angles $(\theta_i)_i$, with each $\theta_i$ chosen independently and uniformly in $[0,2\pi]$. Note that the behaviour is not exponential.}\label{TauxSuiteRotations}
\end{center}
\end{figure}

We would like to explain theoretically this phenomenon. Of course, if we take $P_m = \operatorname{Id}$, then we will have $\Gamma_k = \Z^n$ and the rates of injectivity will be equal to 0. To avoid this kind of ``exceptional cases'', we will study the asymptotic rate of injectivity of a \emph{generic} sequence of matrices of $O_n(\R)$, in the following sense.

\begin{definition}\label{DefTopoSL}
We fix once for all a norm $\|\cdot\|$ on $M_n(\R)$. For any sequence $(P_k)_{k\ge 1}$ of matrices of $O_n(\R)$, we set\index{$\|(P_k)_k\|$}
\[\|(P_k)_k\|_\infty = \sup_{k\ge 1} \|P_k\|.\]
In other words, we consider the space $\ell^\infty(O_n(\R))$ of uniformly bounded sequences of linear isometries endowed with this natural metric.
\end{definition}

This metric is complete, thus there is a good notion of genericity on the set of linear isometries: a set $\mathcal U\subset (O_n(\R))^\N$ is \emph{generic} if it is a countable intersection of open and dense subsets of $\ell^\infty(O_n(\R))$. The main theorem of this paper studies the asymptotic rate of injectivity in this context.

\begin{theorem}\label{DD}
Let $(P_k)_{k\ge 1}$ be a generic sequence of matrices of $O_n(\R)$. Then $\tau^\infty((P_k)_k) = 0$.
\end{theorem}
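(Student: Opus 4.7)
The natural approach is Baire category: exhibit the set of sequences with $\tau^\infty=0$ as a dense $G_\delta$ in $\ell^\infty(O_n(\R))$. For each integer $\ell\geq 1$, set
\[
\mathcal{U}_\ell \;=\; \bigcup_{k\geq 1}\Bigl\{(P_j)_{j\geq 1}\in \ell^\infty(O_n(\R))\ :\ \tau^k(P_1,\ldots,P_k)<1/\ell\Bigr\}.
\]
If each $\mathcal{U}_\ell$ is open and dense, then $\mathcal{G}=\bigcap_\ell \mathcal{U}_\ell$ is generic, and by monotonicity of $k\mapsto\tau^k$, every element of $\mathcal{G}$ satisfies $\tau^\infty=0$.

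\textbf{Openness.} First I would show that, for fixed $k$, the map $(P_1,\ldots,P_k)\mapsto \tau^k(P_1,\ldots,P_k)$ is upper semi-continuous on $O_n(\R)^k$. The point is that the finite pattern of collisions contributing to $\Gamma_k\cap[B_R]$ is locally constant away from the codimension-one walls where some coordinate of $P_j\cdots P_1 x$ equals a half-integer, and when one crosses such a wall points can only be identified further (two lattice preimages now landing in the same unit cell), never separated in a way that would make the $\limsup$ jump up. Passing to the $\limsup$ in $R$ preserves upper semi-continuity, and since $\mathcal{U}_\ell$ is a union over $k$ of preimages of $(-\infty,1/\ell)$ under an upper semi-continuous function composed with the projection $\ell^\infty(O_n(\R))\to O_n(\R)^k$, it is open.

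\textbf{Density (crushing step).} Fix a sequence $(P_j)_j$ and $\eta>0$; I want $(P'_j)_j$ with $\|(P'_j)-(P_j)\|_\infty<\eta$ and some $k$ for which $\tau^k(P'_1,\ldots,P'_k)<1/\ell$. Since $\tau^k$ is non-increasing, it suffices to keep $(P'_j)_{j\leq k_0}$ arbitrary (say equal to $(P_j)_{j\leq k_0}$) and then produce a finite block $P'_{k_0+1},\ldots,P'_{k_0+m}$ (each within $\eta$ of $P_{k_0+j}$) driving the rate below $1/\ell$. The core is therefore the following crushing lemma: there exists $\rho\in(0,1)$, depending only on the current rate, such that for any $Q_1,\ldots,Q_{k_0}\in O_n(\R)$ and any open set $W\subset O_n(\R)$ one can choose $Q_{k_0+1}\in W$ with $\tau^{k_0+1}(Q_1,\ldots,Q_{k_0+1})\leq \rho\,\tau^{k_0}(Q_1,\ldots,Q_{k_0})$. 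Iterating $m\sim \log(1/\ell)$ times produces the desired block, proving density. Combined with the Baire theorem, the result follows.

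\textbf{Main obstacle.} The real content, and what I would focus my effort on, is the crushing lemma. I expect to use the Minkowski/model-set description of $\Gamma_k$ (as alluded to in the paper's setup): the density $\tau^k$ admits an expression as the volume of a certain convex body in a ``window'' torus obtained by Minkowski-summing pullbacks of the unit cube under $P_k^{-1},\ldots,P_1^{-1}$. Adding a new isometry $Q_{k_0+1}$ in generic (irrational) position enlarges this window by an independent direction, and a Weyl-equidistribution or elementary Fubini argument gives that the normalized volume drops by a factor uniformly bounded away from $1$ as soon as the Minkowski sum has not yet filled the torus, i.e. as long as $\tau^{k_0}>0$. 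The delicate point is to obtain a uniform $\rho<1$ independent of the initial segment, which forces one to quantify the density decrease in terms of the current rate only, not of the ambient geometry.
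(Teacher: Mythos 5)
Your high-level Baire architecture (show $\{(P_k):\tau^\infty=0\}$ is a countable intersection of open dense sets, using monotonicity of $k\mapsto\tau^k$, and reduce density to a quantitative ``crushing'' lemma) is exactly the paper's architecture. The gap is in the crushing lemma itself, which is where you correctly locate the whole difficulty.

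You posit a \emph{one-step} crushing lemma: a $\rho<1$ depending only on the current rate such that a single additional isometry $Q_{k_0+1}$, chosen in any prescribed open set, achieves $\tau^{k_0+1}\le\rho\,\tau^{k_0}$. This cannot work as stated, and the obstruction is geometric. Isometries preserve distances, and $\widehat Q$ moves each point of $\Gamma_{k_0}$ by at most $\sqrt n/2$, so a single step can only identify pairs of points of $\Gamma_{k_0}$ that are already within distance $O(\sqrt n)$ of each other. When $\Gamma_{k_0}$ is sparse, of density $D$, the Minkowski-type bound (Theorem~\ref{MinkAlm}) only produces a nonzero difference $u_0$ with $\rho_{\Gamma_{k_0}}(u_0)\ge D/2$ at distance on the order of $D^{-1/n}\gg\sqrt n$; nothing forces the existence of close pairs at positive frequency, so no $\rho<1$ depending on $D$ alone can be extracted in one step. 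You flag the uniformity as the ``delicate point,'' but it is not merely delicate: the one-step statement is not what is true.

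The paper's Lemma~\ref{EstimPerteRot} is a \emph{multi}-step crushing lemma, and the number of steps grows as the density shrinks: given $D(\Gamma)\ge D$ and $\delta>0$, one may $\delta$-perturb $k_0=k_0(D)$ consecutive isometries so that after those $k_0$ steps the density has dropped by a factor $\lambda_0(D,\delta)<1$. Both $k_0$ and $\lambda_0$ deteriorate as $D\to0$, but this is harmless: in the proof of Theorem~\ref{DD} one fixes the target $\tau_0$, sets $k_0=k_0(\tau_0/2)$ and $\lambda_0=\lambda_0(\tau_0/2,\delta)$ once and for all, and iterates the lemma as long as the rate stays above $\tau_0/2$; after $\ell$ blocks with $\lambda_0^\ell<\tau_0$ the job is done. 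The mechanism inside the lemma is also different from your Minkowski-summed-window picture. The paper (i) uses Theorem~\ref{MinkAlm} to find $u_0\in\Z^n\setminus\{0\}$ of controlled norm with $\rho_\Gamma(u_0)\ge D(\Gamma)/2$; (ii) perturbs each $P_k$ to a totally irrational $Q_k$ keeping the image of a fixed ball of lattice points at a definite distance $d_0(R,\delta)$ from $\Z^n$; and (iii) tracks the image of $u_0$ under the ``rounding-towards-the-origin'' maps $\lfloor Q_k\rfloor$. The squared Euclidean norm of $(\lfloor Q_k\rfloor\circ\cdots\circ\lfloor Q_1\rfloor)(u_0)$ is integer-valued and strictly decreasing, so in at most $\|u_0\|_2^2$ steps it reaches $0$; at that moment a positive-density subset of $\Gamma$ is identified with its translate by $u_0$, and Proposition~\ref{ActionDiff} together with Remark~\ref{RemActionDiff} quantify the density drop. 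Your Fubini/Weyl argument on a window torus is closer in spirit to the companion paper on $SL_n(\R)$ discretizations cited at the end of the introduction, not to what is used here.

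A minor point on openness: the heuristic ``crossing a wall can only identify points further'' is symmetric (crossing in the opposite direction separates them), so it does not by itself give upper semi-continuity of $\tau^k$. The paper sidesteps this by perturbing to \emph{totally irrational} matrices, where $\tau$ is locally continuous (Remark~\ref{RemContTrans}), and deduces the strict inequality on a whole neighbourhood from the strict inequality of the lemma.
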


The proof of this theorem will even show that for every $\varep>0$, there exists an open and dense subset of $\ell^\infty(O_n(\R))$ on which $\tau^\infty$ is smaller than $\varep$. This theorem expresses that for ``most of'' the sequences of isometries, the loss of information is total. Thus, for a generic sequence of rotations, we will not be able to avoid the blur observed in Figure~\ref{PoincareRot}.

Note that we do not know what is the rate of injectivity of a sequence of isometries made of independent identically distributed random draws (for example with respect to the Haar measure on $O_n(\R)$).  
\medskip

The proof of Theorem~\ref{DD} will be the occasion to study the structure of the image sets $\Gamma_k = (\widehat{P_k}\circ\cdots\circ\widehat{P_1})(\Z^n)$. It appears that there is a kind of ``regularity at infinity'' in $\Gamma_k$. More precisely, this set is an \emph{almost periodic pattern}: roughly speaking, for $R$ large enough, the set $\Gamma_k \cap B_R$ determines the whole set $\Gamma_k$ up to an error of density smaller than $\varep$ (see Definition~\ref{DefAlmPer}). We prove that the image of an almost periodic pattern by the discretization of a linear map is still an almost periodic pattern (Theorem~\ref{imgquasi}); thus, the sets $\Gamma_k$ are almost periodic patterns.

The idea of the proof of Theorem~\ref{DD} is to take advantage of the fact that for a generic sequence of isometries, we have a kind of independence of the coefficients of the matrices. Thus, for a generic isometry $P\in O_n(\R)$, the set $P(\Z^n)$ is uniformly distributed modulo $\Z^n$. We then remark that the local pattern of the image set $\widehat P(\Z^n)$ around $\widehat P(x)$ is only determined by $P$ and the the remainder of $Px$ modulo $\Z^n$: the global behaviour of $\widehat P(\Z^n)$ is coded by the quotient $\R^n/\Z^n$. This somehow reduces the study to a local problem.

As a first application of this remark, we state that the rate of injectivity in time 1 can be seen as the area of an intersection of cubes (Proposition~\ref{FormTau1}). This observation is one of the two keys of the proof of Theorem~\ref{DD}, the second one being the study of the action of the discretizations $\widehat P$ on the frequencies of differences $\rho_{\Gamma_k}(v) = D\big(({\Gamma_k}-v)\cap{\Gamma_k}\big)$. Indeed, if there exists a set $\Gamma'\subset\Gamma$ of positive density, together with a vector $v$ such that for every $x\in\Gamma'$, we have $\widehat P(x) = \widehat P(x+v)$, then we will have $D(\widehat P(\Gamma)) \le D(\widehat P)-D(\Gamma')$. This study of the frequencies of differences will include a Minkowski-type theorem for almost-periodic patterns (Theorem~\ref{MinkAlm}).
\medskip

\label{BibliLin} The particular problem of the discretization of linear maps has been quite little studied. To our knowledge, what has been made in this direction has been initiated by image processing. One wants to avoid phenomenons like loss of information (due to the fact that discretizations of linear maps are not injective) or aliasing (the apparition of undesirable periodic patterns in the image, due for example to a resonance between a periodic pattern in the image and the discretized map). To our knowledge, the existing studies are mostly interested in the linear maps with \emph{rational coefficients} (see for example \cite{Jacob25}, \cite{MR1382839} or \cite{MR1832794}), including the specific case of \emph{rotations} (see for example \cite{A1996_1075}, \cite{nouvel:tel-00444088}, \cite{thibault:tel-00596947}, \cite{MR1782038}). These works mainly focus on the local behaviour of the images of $\Z^2$ by discretizations of linear maps: given a radius $R$, what pattern can follow the intersection of this set with any ball of radius $R$? What is the number of such patterns, what are their frequencies? Are they complex (in a sense to define) or not? Are these maps bijections? In particular, the thesis \cite{nouvel:tel-00444088} of B.~Nouvel gives a characterization of the angles for which the discrete rotation is a bijection (such angles are countable and accumulate only on $0$). Our result complements that of B.~Nouvel: one the one hand it expresses that a generic sequence of discretizations is far from being a bijection, and on the other hand this remains true in any dimension.

Note that Theorem~\ref{DD} will be generalized to the case of matrices of determinant 1 in \cite{Gui15a}, with more sophisticated techniques (see also \cite{Guih-These}).

\section{Almost periodic sets}\label{ChapAlm}

In this section, we introduce the basic notions that we will use during the study of discretizations of isometries of $\R^n$.

We fix once for all an integer $n\ge 1$. We will denote by $\llbracket a, b \rrbracket$\index{$\llbracket\cdot\rrbracket$} the integer segment $[a,b]\cap\Z$. In this part, every ball will be taken with respect to the infinite norm; in particular, for $x = (x_1,\cdots,x_n)$, we will have\index{$B(x,R)$}
\[B(x,R) = B_\infty(x,R) = \big\{y=(y_1,\cdots,y_n)\in\R^n\mid \forall i\in \llbracket 1, n\rrbracket, |x_i-y_i|<R\big\}.\]
We will also denote $B_R = B(0,R)$\index{$B_R$}. Finally, we will denote by $\lfloor x \rfloor$\index{$\lfloor \cdot \rfloor$} the biggest integer that is smaller than $x$ and $\lceil x \rceil$\index{$\lceil \cdot \rceil$} the smallest integer that is bigger than $x$. For a set $B\subset \R^n$, we will denote $[B] = B\cap \Z^n$.\index{$B$@$[B]$}

\subsection[Almost periodic patterns]{Almost periodic patterns: definitions and first properties}

In this subsection, we define the notion of almost periodic pattern and prove that these sets possess a uniform density.

\begin{definition}
Let $\Gamma$ be a subset of $\R^n$.
\begin{itemize}
\item We say that $\Gamma$ is \emph{relatively dense} if there exists $R_\Gamma>0$ such that each ball with radius at least $R_\Gamma$ contains at least one point of $\Gamma$.
\item We say that $\Gamma$ is a \emph{uniformly discrete} if there exists $r_\Gamma>0$ such that each ball with radius at most $r_\Gamma$ contains at most one point of $\Gamma$.
\end{itemize}
The set $\Gamma$ is called a \emph{Delone} set if it is both relatively dense and uniformly discrete.
\end{definition}

\begin{definition}
For a discrete set $\Gamma\subset \R^n$ and $R\ge 1$, we define the uniform $R$-density:\index{$D_R^+$}
\[D_R^+(\Gamma) = \sup_{x\in\R^n} \frac{\card\big(B(x,R)\cap \Gamma\big)}{\card\big(B(x,R)\cap\Z^n\big)},\]
and the uniform upper density:\index{$D^+$}
\[D^+(\Gamma) = \limsup_{R\to +\infty} D_R^+(\Gamma).\]
\end{definition}

Remark that if $\Gamma\subset \R^n$ is Delone for the parameters $r_\Gamma$ and $R_\Gamma$, then its upper density satisfies:
\[\frac{1}{(2R_\Gamma+1)^n} \le D^+(\Gamma) \le \frac{1}{(2r_\Gamma+1)^n}.\]

We can now define the notion of almost periodic pattern that we will use throughout this paper. Roughly speaking, an almost periodic pattern $\Gamma$ is a set for which there exists a relatively dense set of translations of $\Gamma$, where a vector $v$ is a translation of $\Gamma$ if $\Gamma-v$ is equal to $\Gamma$ up to a set of upper density smaller than $\varep$. More precisely, we state the following definition.

\begin{definition}\label{DefAlmPer}\index{$\mathcal N_\varep$}
A Delone set $\Gamma\subset\Z^n$ is an \emph{almost periodic pattern} if for every $\varep>0$, there exists $R_\varep>0$ and a relatively dense set $\mathcal N_\varep$, called the \emph{set of $\varep$-translations} of $\Gamma$, such that
\begin{equation}\label{EqAlmPer}
\forall R\ge R_\varep,\  \forall v\in\mathcal N_\varep,\  D_R^+\big( (\Gamma+v)\Delta \Gamma \big) <\varep.
\end{equation}
\end{definition}

Of course, every lattice, or every finite union of translates of a given lattice, is an almost periodic pattern. We will see in next subsection a large class of examples of almost periodic patterns: images of $\Z^n$ by discretizations of linear maps.
\medskip

We end this introduction to almost periodic patterns by stating that the notion of almost periodic pattern is invariant under discretizations of linear isometries: the image of an almost periodic pattern by the discretization of a linear isometry is still an almost periodic pattern.

\begin{theoreme}\label{imgquasi}
Let $\Gamma\subset\Z^n$ be an almost periodic pattern and $P\in O_n(\R)$. Then the set $\widehat P(\Gamma)$ is an almost periodic pattern.
\end{theoreme}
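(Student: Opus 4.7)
The easy first observation is that $\widehat P(\Gamma)\subset\Z^n$ is Delone. Uniform discreteness is automatic because any subset of $\Z^n$ satisfies it with $r=1/2$; relative density is inherited from $P(\Gamma)$ since $P$ is a Euclidean isometry and $\pi$ moves any point by at most $1/2$ in each coordinate, a dimensional constant absorbing the passage between Euclidean and sup-norm balls. The real content is thus to construct, for every $\varepsilon>0$, a relatively dense set of integer vectors $w$ with $D^+\bigl((\widehat P(\Gamma)+w)\Delta\widehat P(\Gamma)\bigr)<\varepsilon$.

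My candidates will be $w=\pi(Pv)$, where $v\in\mathcal N_{2\eta}$ (with $\eta=\eta(\varepsilon)$ small) is chosen so that in addition $\alpha:=Pv-\pi(Pv)\in[-1/2,1/2)^n$ has very small sup-norm. The key algebraic identity, using $\pi(y+m)=\pi(y)+m$ for every $m\in\Z^n$, is
\[
\widehat P(x+v)=\widehat P(x)+w+\pi\bigl(\{Px\}+\alpha\bigr),\qquad \{Px\}:=Px-\widehat P(x)\in[-1/2,1/2)^n,
\]
so $\widehat P(x+v)=\widehat P(x)+w$ unless $\{Px\}$ lies in the ``bad'' region $B_\alpha:=\{\beta\in[-1/2,1/2)^n\mid\beta+\alpha\notin[-1/2,1/2)^n\}$, of Lebesgue measure $\leq 2n\|\alpha\|_\infty$. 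A triangle inequality
\[
(\widehat P(\Gamma)+w)\Delta\widehat P(\Gamma)\subset\bigl[(\widehat P(\Gamma)+w)\Delta\widehat P(\Gamma+v)\bigr]\cup\bigl[\widehat P(\Gamma+v)\Delta\widehat P(\Gamma)\bigr]
\]
then splits the estimate: the first piece lives in the $\widehat P$-images of $\Gamma_{\mathrm{bad}}:=\{x\in\Gamma:\{Px\}\in B_\alpha\}$ and of its $v$-translate, hence has upper density at most $2C_n D^+(\Gamma_{\mathrm{bad}})$; the second, using the set-theoretic inclusion $\widehat P(A)\Delta\widehat P(B)\subset\widehat P(A\Delta B)$ together with the fact that $\widehat P$ inflates upper densities by at most a dimensional factor $C_n$, is bounded by $2C_n\eta$.

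The step I expect to be the main obstacle is the control of $D^+(\Gamma_{\mathrm{bad}})$. I would majorize it by $D^+\bigl(\{x\in\Z^n:\{Px\}\in B_\alpha\}\bigr)$ and invoke Weyl equidistribution of $(Px\bmod\Z^n)_{x\in\Z^n}$ on the closed subgroup $\overline H\subset\R^n/\Z^n$ it generates; uniformity in the centre of the ball is automatic, because translating the centre by $x_0\in\Z^n$ merely shifts the empirical distribution by $Px_0\in\overline H$, which leaves the Haar measure on $\overline H$ invariant. The resulting upper density equals $m_{\overline H}(B_\alpha\cap\overline H)$, which tends to $0$ as $\|\alpha\|_\infty\to 0$; in the degenerate case $\overline H=\{0\}$ the matrix $P$ is rational and the argument is immediate with $w=Pv\in\Z^n$ and $\Gamma_{\mathrm{bad}}=\emptyset$.

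Relative density of the resulting $w$'s is obtained by combining the construction with the approximate-subgroup structure of $\mathcal N_\eta$: the sub-additivity $\mathcal N_\eta-\mathcal N_\eta\subset\mathcal N_{2\eta}$ follows from a triangle inequality on symmetric differences. Given a target centre $c\in\R^n$, pick first some $v_0\in\mathcal N_\eta$ within $R_\eta$ of $P^{-1}c$; then inside a ball $B(0,N)$ with $N=N(\eta,\zeta)$ large enough, the equidistribution of $\{Pv\bmod\Z^n\}_{v\in\Z^n\cap B(0,N)}$ on $\overline H$ combined with the positive lower density of $\mathcal N_\eta$ in $\Z^n$ produces $v_2\in\mathcal N_\eta\cap B(0,N)$ whose image $\{Pv_2\}$ lies in the same $\zeta$-cube of $\overline H$ as $\{Pv_0\}$. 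Then $v:=v_0-v_2\in\mathcal N_{2\eta}$ satisfies $\|\{Pv\}\|_\infty<\zeta$, and $w=\pi(Pv)$ lies within $N+R_\eta+O(1)$ of $c$. Tuning $\eta$ and $\zeta$ small enough in terms of $\varepsilon$ then finishes the proof.
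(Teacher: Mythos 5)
Your overall strategy is close to the paper's: produce $\varepsilon$-translations $v$ of $\Gamma$ with $Pv$ near $\Z^n$, show $\pi(Pv)$ is then a translation of $\widehat P(\Gamma)$, and control the defect via the ``bad'' set of $x$ whose fractional part $\{Px\}$ lies near the boundary of the fundamental cube, using equidistribution. This is precisely the decomposition underlying Lemmas~\ref{tiroir} and~\ref{�qui}. However, two steps do not hold as stated.

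\textbf{Relative density of the translations.} You claim that for $N=N(\eta,\zeta)$ large enough, the equidistribution of $\{Pv\}_{v\in\Z^n\cap B(0,N)}$ on $\overline H$, combined with the positive lower density of $\mathcal N_\eta$, produces some $v_2\in\mathcal N_\eta\cap B(0,N)$ with $\{Pv_2\}$ in the prescribed $\zeta$-cube. This does not follow: equidistribution over all of $\Z^n$ says nothing about where $\{Pv\}$ lands when $v$ is restricted to $\mathcal N_\eta$, and two subsets of $\Z^n$ of positive lower density can perfectly well be disjoint. What you would actually need is equidistribution of $P(\mathcal N_\eta)\bmod\Z^n$, and this is exactly the content of Lemma~\ref{passifacil} of the paper --- which is proved only for a \emph{generic} $P$, whereas the theorem concerns an arbitrary $P\in O_n(\R)$. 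The paper avoids this circularity by a pigeonhole argument (Lemma~\ref{tiroir}): for any $v'\in\mathcal N_{\varep/\ell}$, one of the multiples $v', 2v', \dots, \ell v'$ (with $\ell=\lceil 1/\delta\rceil^n$) has $P(kv')$ within $\delta$ of $\Z^n$, and $kv'\in\mathcal N_{k\varep/\ell}\subset\mathcal N_\varep$ by the additivity property you also cite. Combined with a Hausdorff-distance approximation step, this yields, for every $v\in\mathcal N_{\varep'}$, a corrector $v''$ lying in a \emph{fixed bounded ball} $B_{\ell R_0}$ with $v+v''\in\mathcal N_\varep$ and $P(v+v'')$ close to $\Z^n$; relative density then follows because $\mathcal N_{\varep'}$ is relatively dense and $v''$ is uniformly bounded. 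Your density-counting heuristic should be replaced by this pigeonhole mechanism.

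\textbf{The intermediate rational case.} You assert that $m_{\overline H}(B_\alpha\cap\overline H)\to 0$ as $\|\alpha\|_\infty\to 0$ and isolate only the degenerate case $\overline H=\{0\}$. But when $P$ has some (not all) rational rows, $\overline H$ may charge the boundary of the fundamental cube: if the $i$-th row of $P$ is rational and the fractional part $\{(Px)_i\}$ takes the extreme value $1/2$ with positive frequency over $x\in\Z^n$ (e.g.\ for the orthonormal row $(1,1,1,1)/2$ in dimension $4$), then any $\alpha$ with $\alpha_i>0$, however small, pushes a fixed positive fraction of $\{Px\}$ out of the cube, so $m_{\overline H}(B_\alpha\cap\overline H)$ is bounded below. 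The paper resolves this by forcing $(Pv)_i\in\Z$ for every $i\in I_\Q(P)$ (this is the role of the restriction $v\in q\Z^n$ in Lemma~\ref{tiroir}), so that $\alpha_i=0$ exactly for those coordinates, and by stating Lemma~\ref{�qui} only for perturbations $w$ with $w_i=0$ for $i\in I_\Q(P)$. Your argument needs the analogous restriction; without it the estimate on $D^+(\Gamma_{\mathrm{bad}})$ fails.
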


This implies that, given a sequence $(P_k)_{k\ge 1}$ of isometries of $\R^n$, the successive images $(\widehat{P_k}\circ\cdots\circ \widehat{P_1})(\Z^n)$ are almost periodic patterns. See Figure~\ref{ImagesSuitesRotations} for an example of the successive images of $\Z^2$ by a random sequence of bounded matrices of $O_2(\R)$. The proof of Theorem~\ref{imgquasi} will be done in Appendix~\ref{TrouVentre}. Examples of sets $\widehat P(\Z^2)$ for various rotations $P$ can be found in Figure~\ref{ImagesRotations}, where the almost periodicity is patent.

\begin{figure}[ht]
\begin{minipage}[c]{.33\linewidth}
	\includegraphics[width=\linewidth, trim = 1.5cm .5cm 1.5cm .5cm,clip]{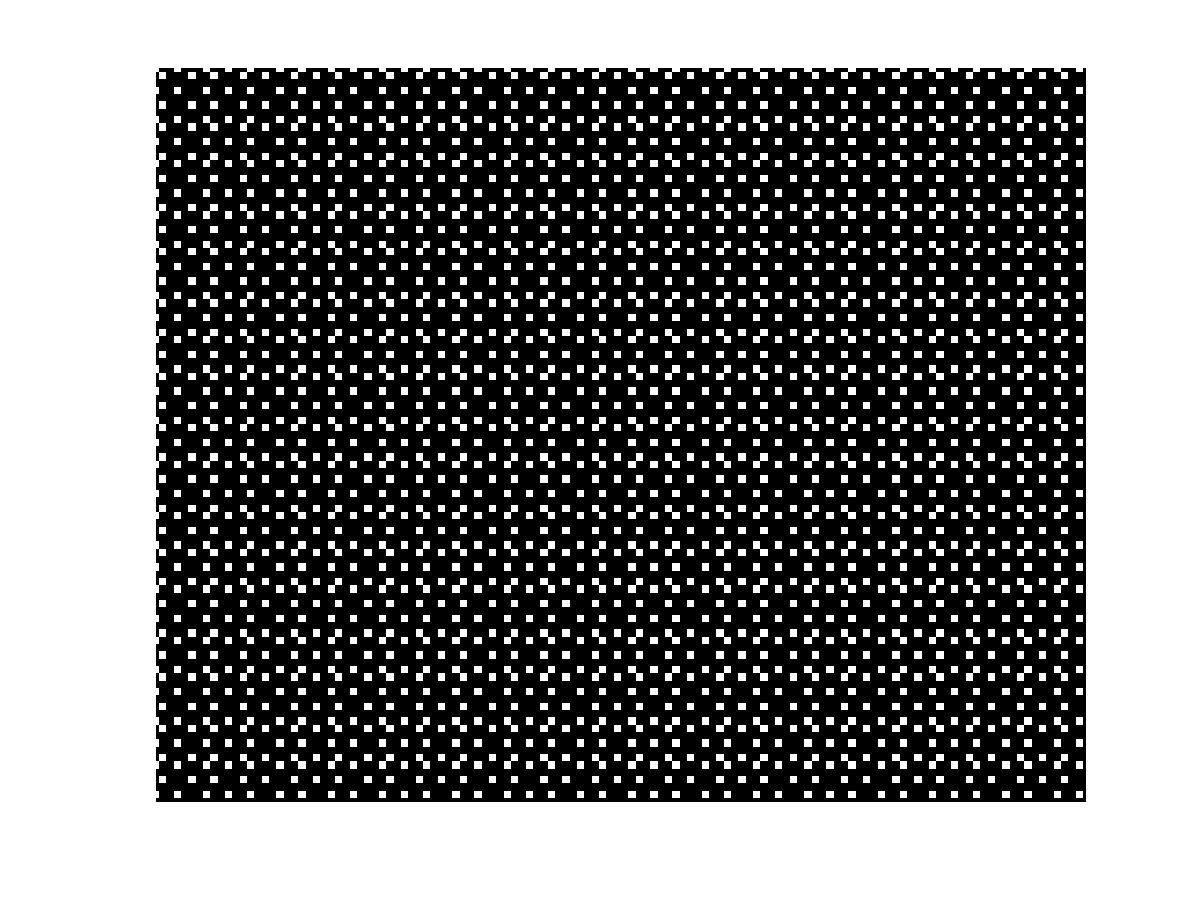}
\end{minipage}\hfill
\begin{minipage}[c]{.33\linewidth}
	\includegraphics[width=\linewidth, trim = 1.5cm .5cm 1.5cm .5cm,clip]{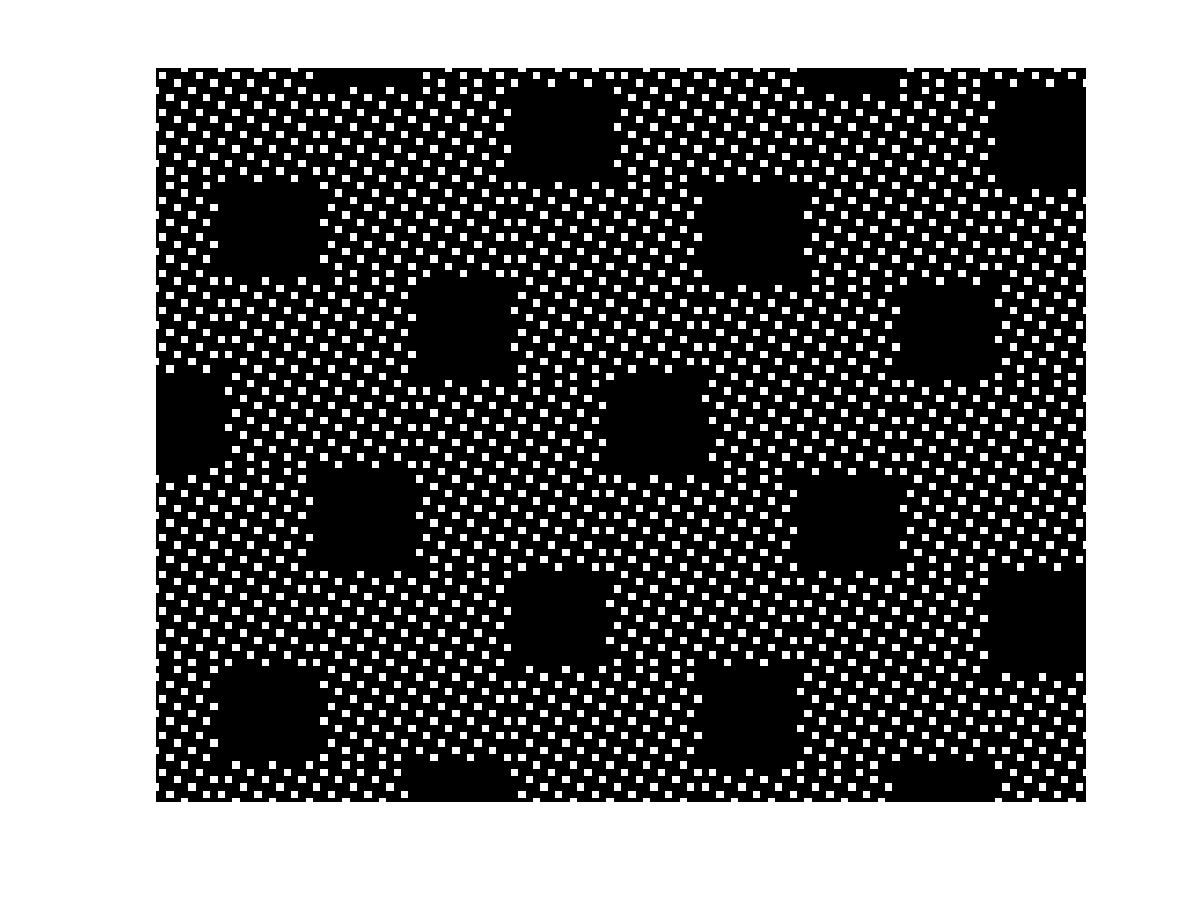}
\end{minipage}\hfill
\begin{minipage}[c]{.33\linewidth}
	\includegraphics[width=\linewidth, trim = 1.5cm .5cm 1.5cm .5cm,clip]{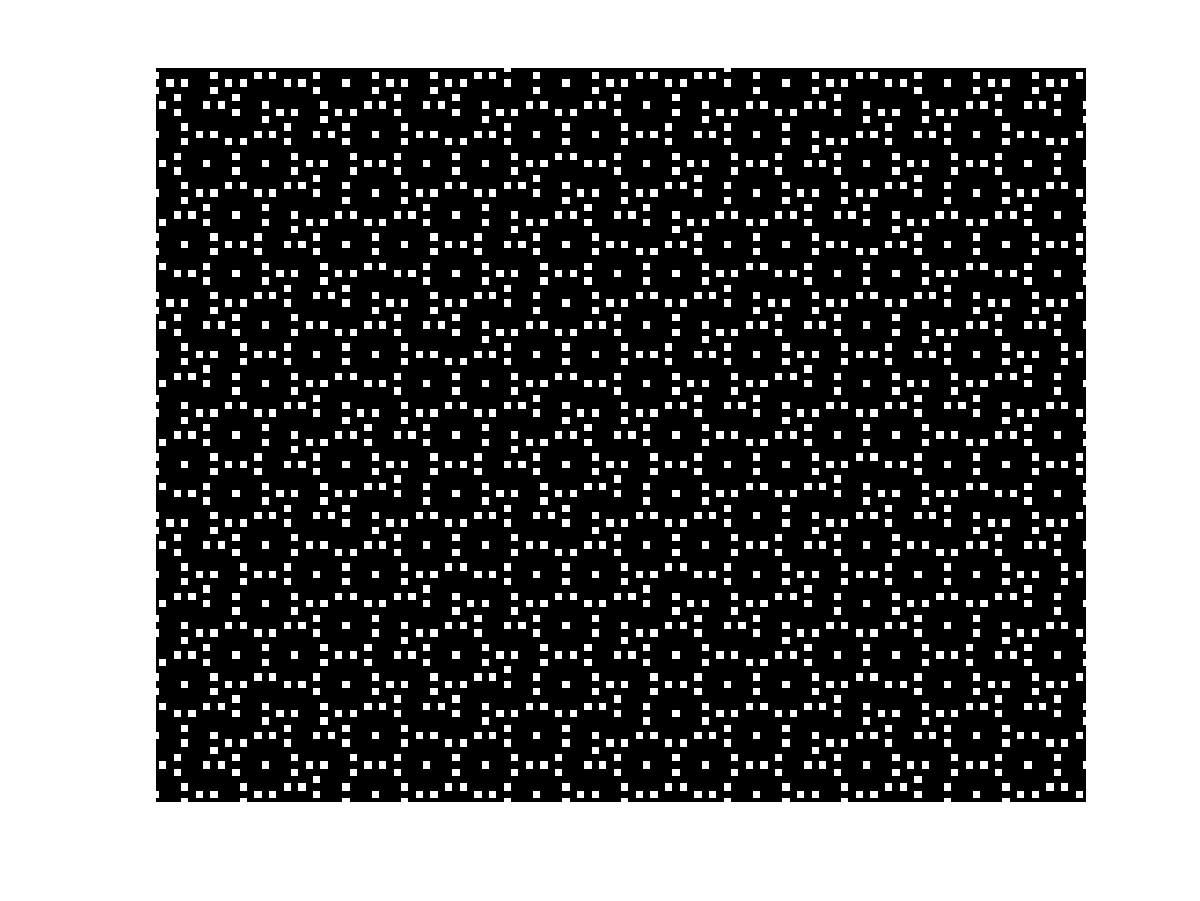}
\end{minipage}

\caption[Images of $\Z^2$ by discretizations of rotations]{Images of $\Z^2$ by discretizations of rotations, a point is black if it belongs to the image of $\Z^2$ by the discretization of the rotation. From left to right and top to bottom, angles $\pi/4$, $\pi/5$ and $\pi/6$.}\label{ImagesRotations}
\end{figure}

\subsection{Differences in almost periodic patterns}

We will need to understand how behave the differences in an almost periodic pattern $\Gamma$, i.e. the vectors $x-y$ with $x,y\in\Gamma$. In fact, we will study the frequency of appearance of these differences.

\begin{definition}\label{DefDiff}
For $v\in\Z^n$, we set\index{$\rho_\Gamma$}
\[\rho_\Gamma(v) = \frac{D\{x\in\Gamma\mid x+v\in\Gamma\}}{D(\Gamma)} = \frac{D\big(\Gamma\cap(\Gamma-v)\big)}{D(\Gamma)} \in [0,1]\]
the \emph{frequency} of the difference $v$ in the almost periodic pattern $\Gamma$.
\end{definition}

Studying frequencies of differences allows to focus on the global behaviour of an almost periodic set. The function $\rho_\Gamma$ is itself almost periodic in the sense given by H. Bohr (see \cite{MR1555192}).

\begin{definition}
Let $f : \Z^n\to \R$. Denoting by $T_v$\index{$T_v$} the translation of vector $v$, we say that $f$ is \emph{Bohr almost periodic} (also called \emph{uniformly almost periodic}) if for every $\varep>0$, the set
\[\mathcal N_\varep = \big\{v\in\Z^n\mid \|f - f\circ T_v \|_\infty<\varep \big\},\]
is relatively dense.
\end{definition}

If $f: \Z^n\to \R$ is a Bohr almost periodic function, then it possesses a \emph{mean}\index{$\mathcal M$} $\mathcal M(f)$ (see for example the historical paper of H. Bohr \cite[Satz VIII]{MR1555192}), which satisfies: for every $\varep>0$, there exists $R_0>0$ such that for every $R\ge R_0$ and every $x\in\R^n$, we have
\[\left|\mathcal M(f) - \frac{1}{\card[B(x,R)]}\sum_{v\in [B(x,R)]} f(v) \right| <\varep.\]

The fact that $\rho_\Gamma$ is Bohr almost periodic is straightforward.

\begin{lemme}
If $\Gamma$ is an almost periodic pattern, then the function $\rho_\Gamma$ is Bohr almost periodic.
\end{lemme}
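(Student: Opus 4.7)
The plan is to exhibit, for every $\varep>0$, a relatively dense set of $\varep$-almost-periods for the function $\rho_\Gamma$. The natural candidate is the set $\mathcal N_{\varep'}$ of $\varep'$-translations of $\Gamma$ itself, for $\varep' = \varep\, D(\Gamma)$ (shrinking $\varep'$ slightly if one wants a strict inequality). Since $\Gamma$ is an almost periodic pattern, $\mathcal N_{\varep'}$ is relatively dense by hypothesis, so it suffices to check that the almost-period property transfers from $\Gamma$ to $\rho_\Gamma$ with loss at most the factor $1/D(\Gamma)$.

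Fix $\varep>0$, $w\in \mathcal N_{\varep'}$ and an arbitrary $v\in\Z^n$. The core of the argument is the elementary inclusion
\[\big(\Gamma \cap (\Gamma - v - w)\big)\,\Delta\,\big(\Gamma \cap (\Gamma - v)\big) \;\subseteq\; (\Gamma - v - w)\,\Delta\,(\Gamma - v),\]
together with the identity $(\Gamma - v - w)\Delta (\Gamma - v) = \big((\Gamma - w)\Delta \Gamma\big) - v$. By translation invariance of $D^+$ and the symmetry $D^+\big((\Gamma - w)\Delta \Gamma\big) = D^+\big((\Gamma + w)\Delta \Gamma\big)$ (both sets differ by a translation by $w$), the right-hand side has upper density $<\varep'$ by the defining property of $\mathcal N_{\varep'}$. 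Dividing by $D(\Gamma)$ yields
\[\big|\rho_\Gamma(v+w) - \rho_\Gamma(v)\big| \;\le\; \frac{1}{D(\Gamma)}\, D^+\big((\Gamma - v - w)\Delta (\Gamma - v)\big) \;<\; \varep,\]
and this estimate is uniform in $v\in\Z^n$. Hence $\mathcal N_{\varep'}\subseteq\{w\in\Z^n : \|\rho_\Gamma - \rho_\Gamma\circ T_w\|_\infty < \varep\}$, which is therefore relatively dense, proving the Bohr almost periodicity of $\rho_\Gamma$.

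The only point that has to be dealt with beforehand, rather than within the proof itself, is the well-definedness of $\rho_\Gamma$: one needs $D(\Gamma)$ and $D\big(\Gamma\cap(\Gamma-v)\big)$ to exist as genuine uniform densities, not merely as upper densities. This is where the almost periodic structure of $\Gamma$ is used in an essential way, and it should be the content of the preliminary result announced at the start of the subsection (``these sets possess a uniform density''); once granted, the computation above is entirely elementary and there is no substantive obstacle.
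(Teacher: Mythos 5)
The paper offers no proof of this lemma; it simply declares the fact ``straightforward'' after recalling Bohr's definition. Your argument is a correct and complete write-up of exactly the straightforward argument the author has in mind: take $\mathcal N_{\varep D(\Gamma)}$ as the candidate set of $\varep$-almost-periods for $\rho_\Gamma$, and transfer the translation property of $\Gamma$ to $\rho_\Gamma$ via the elementary inclusion $(\Gamma\cap B)\Delta(\Gamma\cap C)\subseteq B\Delta C$ together with the translation-invariance of $D^+$ and the symmetry $D^+\big((\Gamma-w)\Delta\Gamma\big)=D^+\big((\Gamma+w)\Delta\Gamma\big)$. Each step is sound, and the uniformity in $v$ is built in, since the bound depends only on $w$.

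One small remark on your closing caveat. You are right that for $\rho_\Gamma$ to be well-defined as written in the paper one needs $D(\Gamma)$ and $D\big(\Gamma\cap(\Gamma-v)\big)$ to exist as honest uniform densities, and that this rests on the preliminary remarks of the subsection. However, the estimate you prove does not actually depend on this: the inequality $\big|D^+(A)-D^+(B)\big|\le D^+(A\Delta B)$ holds for arbitrary sets because $A\subseteq B\cup(A\Delta B)$ and $D^+$ is subadditive, so even if $\rho_\Gamma$ were defined with upper densities throughout, your chain of inequalities would go through unchanged. So the well-definedness issue is a matter of the definition of $\rho_\Gamma$ making sense, not a hidden hypothesis of your proof; it is correct to flag it, but it is not a gap.
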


In fact, we can compute precisely the mean of $\rho(\Gamma)$.

\begin{prop}\label{IntRho}
If $\Gamma$ is an almost periodic pattern, then we have 
\[\mathcal M (\rho_\Gamma) = D(\Gamma).\]
\end{prop}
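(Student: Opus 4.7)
The plan is a Fubini-type argument: expand the density $D(\Gamma \cap (\Gamma-v))$ that appears in $\rho_\Gamma(v)$ as an average of $\mathbf{1}_\Gamma(x)\mathbf{1}_\Gamma(x+v)$ over $x$, exchange the order of summation with the mean in $v$, and perform a change of variable $y = x+v$ so that the inner sum becomes a count of the form $\card(\Gamma \cap (x + [B_{R'}]))$. Heuristically, each such count is asymptotic to $D(\Gamma) \cdot \card[B_{R'}]$, so the double average collapses to $D(\Gamma) \cdot \card(\Gamma \cap [B_R])/\card[B_R] \to D(\Gamma)^2$, and dividing by the factor $D(\Gamma)$ present in $\rho_\Gamma$ yields the claim.

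Concretely, I would write
\[
\frac{1}{\card[B_{R'}]} \sum_{v \in [B_{R'}]} \rho_\Gamma(v) = \frac{1}{D(\Gamma) \card[B_{R'}]} \sum_{v \in [B_{R'}]} \lim_{R \to \infty} \frac{1}{\card[B_R]} \sum_{x \in [B_R]} \mathbf{1}_\Gamma(x)\mathbf{1}_\Gamma(x+v),
\]
pull the (finite) outer sum in $v$ inside the limit in $R$, exchange the two finite sums, and recognize $\sum_{v \in [B_{R'}]} \mathbf{1}_\Gamma(x+v) = \card(\Gamma \cap (x + [B_{R'}]))$. This produces
\[
\frac{1}{\card[B_{R'}]} \sum_{v \in [B_{R'}]} \rho_\Gamma(v) = \lim_{R \to \infty} \frac{1}{D(\Gamma) \card[B_R]\card[B_{R'}]} \sum_{x \in [B_R]} \mathbf{1}_\Gamma(x)\, \card\bigl(\Gamma \cap (x + [B_{R'}])\bigr).
\]
Using that $\Gamma$ admits a \emph{uniform} density (the density $D(\Gamma)$ is in fact the uniform limit of $\card(\Gamma \cap B(x,R'))/\card[B(x,R')]$ in $x \in \R^n$ as $R' \to \infty$), the inner count is $D(\Gamma)\card[B_{R'}](1+o(1))$ as $R' \to \infty$, uniformly in $x$. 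Substituting, the right-hand side equals $\bigl(\card(\Gamma \cap [B_R])/\card[B_R]\bigr)(1+o(1))$, whose limit in $R$ is $D(\Gamma)(1+o(1))$. Letting $R' \to \infty$ and invoking the characterization of $\mathcal{M}(\rho_\Gamma)$ as the limit of such averages (guaranteed by the Bohr almost periodicity of $\rho_\Gamma$ stated in the preceding lemma) gives $\mathcal{M}(\rho_\Gamma) = D(\Gamma)$.

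The main obstacle is the \emph{uniformity} of the density: one must know that $\card(\Gamma \cap (x+[B_{R'}]))/\card[B_{R'}]$ converges to $D(\Gamma)$ uniformly in $x$. This is where almost periodicity is essentially used, and I expect it either to be established earlier in Section~\ref{ChapAlm} or to require a short independent argument: given $\varep>0$, a relatively dense set $\mathcal{N}_\varep$ of $\varep$-translations allows one to compare $\Gamma \cap B(x,R')$ to a translate of $\Gamma \cap B(0,R')$ up to an error of density $\varep$, controlling the fluctuation of $\card(\Gamma \cap B(x,R'))$ uniformly in $x$. Once this uniform density is in hand, the rest of the computation is purely formal bookkeeping.
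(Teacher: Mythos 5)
Your proposal is correct and coincides essentially with the paper's own argument: expand $\rho_\Gamma(v)$ via the limit defining the density, exchange the finite average over $v$ with that limit, perform the change of variable so the inner sum becomes a count of $\Gamma$ in a shifted ball, and invoke the uniform density of the almost periodic pattern to replace that count by $D(\Gamma)\cdot\card[B_{R'}]$ up to a uniformly small error. The paper's proof is exactly this Fubini-type computation, and your identification of the uniform density (the fact that $\card(\Gamma\cap B(x,R))/\card[B(x,R)]\to D(\Gamma)$ uniformly in $x$) as the crucial ingredient matches what the paper appeals to via its Equation~\eqref{eqDens}.
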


The proof of this proposition will be done in Appendix~\ref{AppenTech}.

We now state a Minkowski-type theorem for the map $\rho_\Gamma$. To begin with, we recall the classical Minkowski theorem (see for example the book \cite{MR893813}).

\begin{theoreme}[Minkowski]\label{Minkowski}
Let $\Lambda$ be a lattice of $\R^n$, $k\in\N$ and $S\subset\R^n$ be a centrally symmetric convex body. If $\Leb(S/2) > k \operatorname{covol}(\Lambda)$, then $S$ contains at least $2k$ distinct points of $\Lambda\setminus\{0\}$.
\end{theoreme}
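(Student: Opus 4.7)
The plan is to deduce this via the classical \emph{Blichfeldt pigeonhole lemma}: any measurable set $T \subset \R^n$ with $\Leb(T) > k \operatorname{covol}(\Lambda)$ contains $k+1$ distinct points whose pairwise differences all lie in $\Lambda$. Once Blichfeldt is available, I would apply it to $T = S/2$; central symmetry together with convexity will force the resulting differences to lie in $S$, and a small ordering trick will produce $2k$ distinct nonzero lattice points.

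To prove Blichfeldt, I fix any fundamental domain $D$ for $\Lambda$, decompose $T$ as the disjoint union of the pieces $T \cap (D + \lambda)$ over $\lambda \in \Lambda$, and translate each piece back by $-\lambda$ to obtain subsets $T_\lambda \subset D$. Since translation preserves Lebesgue measure, $\sum_\lambda \Leb(T_\lambda) = \Leb(T) > k \Leb(D)$. Integrating the counting function $\sum_\lambda \mathbf{1}_{T_\lambda}$ over $D$ therefore yields a value strictly greater than $k \Leb(D)$, so this function takes value at least $k+1$ on a set of positive measure. Any point $x \in D$ at which $\sum_\lambda \mathbf{1}_{T_\lambda}(x) \geq k+1$ produces $k+1$ distinct $\lambda_0, \dots, \lambda_k \in \Lambda$ with $x + \lambda_i \in T$, and then the points $t_i := x + \lambda_i$ satisfy $t_i - t_j = \lambda_i - \lambda_j \in \Lambda$, giving Blichfeldt.

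Applied to $T = S/2$, Blichfeldt produces distinct $s_0, \dots, s_k \in S/2$ with $s_i - s_j \in \Lambda \setminus \{0\}$ whenever $i \neq j$. Since $S$ is centrally symmetric, $-s_j \in S/2$, hence $2 s_i \in S$ and $-2 s_j \in S$; by convexity their midpoint $s_i - s_j$ lies in $S$. To extract $2k$ distinct nonzero lattice points from these differences, I would fix a lexicographic order on $\R^n$, relabel so that $s_0$ is the lex-minimum of the $s_i$, and consider the $2k$ points $\pm (s_i - s_0)$ for $i = 1, \dots, k$: the $k$ points $s_i - s_0$ are strictly lex-positive and pairwise distinct, their opposites are strictly lex-negative and pairwise distinct, and the two families are obviously disjoint. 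The only subtle (and genuinely necessary) step is precisely this last combinatorial one, since a priori many of the $k(k+1)$ differences $s_i - s_j$ could collapse; the lexicographic ordering trick bypasses this cleanly and gives the required $2k$ distinct elements of $(S \cap \Lambda) \setminus \{0\}$.
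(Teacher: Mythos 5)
The paper does not prove this theorem: it is stated as a classical fact and cited to the reference \cite{MR893813}, so there is no in-paper argument to compare against. Your proof is correct and is the standard route to this ``$k$-fold'' Minkowski theorem (sometimes attributed to van der Corput): Blichfeldt's pigeonhole lemma via a measurable fundamental-domain decomposition of $T=S/2$ and Tonelli, then central symmetry and convexity to place each difference $s_i-s_j=\tfrac12(2s_i)+\tfrac12(-2s_j)$ in $S$, and finally an ordering trick to extract $2k$ genuinely distinct nonzero lattice points. The one step that does require care is exactly the one you flag: the $\binom{k+1}{2}$ differences $s_i-s_j$ with $i<j$ may collapse, so one cannot simply take all of them. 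Anchoring at the lexicographic minimum $s_0$ and taking $\pm(s_i-s_0)$, $i=1,\dots,k$, fixes this cleanly, since these $k$ differences are pairwise distinct, nonzero, and strictly lex-positive, hence disjoint from their negatives. All $2k$ points lie in $\Lambda\setminus\{0\}$ because the $s_i$ are distinct and Blichfeldt gives $s_i-s_j\in\Lambda$, and they lie in $S$ by the convexity argument together with central symmetry for the negatives. The argument is complete.
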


\begin{theoreme}\label{MinkAlm}
Let $\Gamma\subset \Z^n$ be an almost periodic pattern of density $D(\Gamma)$. Let $S$ be a centrally symmetric body, with $\Leb(S)> 4^nk$. If for every $v\in S\cap \Z^n$, we have $\rho_\Gamma(v)<\rho_0$, then
\[\rho_0\ge\frac{1}{k}\left(1-\frac{1}{D(\Gamma)(2k+1)}\right).\]
In particular, if $k\ge \frac{1}{D(\Gamma)}$, then there exists $x\in C\cap \Z^n$ such that $\rho_\Gamma(x)\ge \frac{D(\Gamma)}{2}$.
\end{theoreme}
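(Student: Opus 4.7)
The plan is to combine Minkowski's theorem applied to $S/2$ with a density-level Bonferroni inequality for a well chosen union of translates of $\Gamma$.

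First, I would locate many lattice points in $S/2$. The hypothesis $\Leb(S)>4^nk$ rewrites as $\Leb(S/2)>2^nk=k\cdot\operatorname{covol}(2\Z^n)$, so Theorem~\ref{Minkowski} applied to the lattice $\Lambda=2\Z^n$ and the body $S$ furnishes $2k$ distinct nonzero points of $2\Z^n$ inside $S$. Halving them produces $2k$ distinct nonzero points of $\Z^n$ inside $S/2$, so $T=(S/2)\cap\Z^n$ has at least $2k+1$ elements (including $0$). I then fix some $T'\subset T$ with $0\in T'$ and $|T'|=2k+1$.

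Next, I would analyze $U=\bigcup_{t\in T'}(\Gamma+t)\subset\Z^n$, for which trivially $D(U)\le 1$. Applying Bonferroni's inequality to the finite sets $U\cap B_R$, dividing by $\card[B_R]$, and letting $R\to\infty$ yields
\[D(U)\ \ge\ \sum_{t\in T'}D(\Gamma+t)\ -\ \sum_{\{t,t'\}\subset T'}D\bigl((\Gamma+t)\cap(\Gamma+t')\bigr).\]
Using the translation invariance $D(\Gamma+t)=D(\Gamma)$ and the definition of $\rho_\Gamma$, which gives $D((\Gamma+t)\cap(\Gamma+t'))=\rho_\Gamma(t-t')D(\Gamma)$, the right-hand side simplifies to $(2k+1)D(\Gamma)-D(\Gamma)\sum_{\{t,t'\}}\rho_\Gamma(t-t')$.

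For distinct $t,t'\in T'$ the vector $t-t'$ is a nonzero element of $\Z^n$ lying in $T'-T'\subset S/2-S/2=S$ (by central symmetry and convexity of $S$), hence $\rho_\Gamma(t-t')<\rho_0$ by hypothesis. There are $\binom{2k+1}{2}=k(2k+1)$ such pairs, and combining with $D(U)\le 1$ gives $1\ge(2k+1)D(\Gamma)\bigl(1-k\rho_0\bigr)$, which rearranges to exactly $\rho_0\ge\frac{1}{k}\bigl(1-\frac{1}{D(\Gamma)(2k+1)}\bigr)$. The ``in particular'' assertion then follows by contraposition with $\rho_0=D(\Gamma)/2$: an elementary computation shows the resulting inequality fails once $k\ge 1/D(\Gamma)$.

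The main technical issue I expect is passing the Bonferroni inequality to densities uniformly in $t,t'\in T'$; this rests on the almost-periodicity of $\Gamma$, which guarantees that every $D(\Gamma\cap(\Gamma-v))$ exists as a genuine limit (and not merely as a limit superior) and that the boundary contributions coming from the bounded set $T'$ are negligible as $R\to\infty$.
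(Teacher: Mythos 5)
Your argument is correct and coincides with the paper's own proof: both apply Minkowski's theorem to place $2k+1$ integer points in $S/2$ (you via the lattice $2\Z^n$, the paper via the body $S/2$, which is the same computation), then bound $D\bigl(\bigcup_i(\Gamma+u^i)\bigr)\le 1$ with the inclusion–exclusion inequality and the observation that pairwise differences land in $S$. The paper leaves the passage to densities implicit, whereas you flag and justify it via almost periodicity; otherwise the two proofs are identical.
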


\begin{proof}[Proof of Theorem~\ref{MinkAlm}]
Minkowski theorem (Theorem~\ref{Minkowski}) asserts that $S/2$ contains at least $2k+1$ distinct points of $\Z^n$, denoted by $u^i$. By the hypothesis on the value of $\rho_\Gamma$ on $S$, and because the set of differences of $S/2$ is included in $S$, we know that the density of $(\Gamma + u^i)\cap (\Gamma+u^j)$ is smaller than $\rho_0 D(\Gamma)$. Thus,
\begin{align*}
D\Big(\bigcup_i (\Gamma +u^i)\Big) & \ge \sum_i D(\Gamma) - \sum_{i<j} D\big((\Gamma + u^i)\cap (\Gamma+u^j)\big)\\
  & \ge (2k+1)D(\Gamma) - \frac{2k(2k+1)}{2}\rho_0 D(\Gamma).
\end{align*}
The theorem the follows from the fact that the left member of this inequality is smaller than 1.
\end{proof}

\section{Rate of injectivity of isometries}\label{Souris}

We now focus in more detail on the rate of injectivity of a sequence of isometries (see Definition~\ref{DefTaux}).

\subsection{A geometric viewpoint on the rate of injectivity}\label{ptgeom}

In this subsection, we present a geometric construction to compute the rate of injectivity of a generic matrix, and some applications of it.

Let $P\in O_n(\R)$ and $\Lambda = P(\Z^n)$. The density of $\pi(\Lambda)$ is the proportion of $x\in\Z^n$ belonging to $\pi(\Lambda)$; in other words the proportion of $x\in\Z^n$ such that there exists $\lambda\in \Lambda$ whose distance to $x$ (for $\|\cdot\|_\infty$) is smaller than $1/2$. remark that this property only depends on the value of $x$ modulo $\Lambda$. If we consider the union\index{$U$}
\[U = \bigcup_{\lambda\in\Lambda} B(\lambda,1/2)\]
of balls of radius $1/2$ centred on the points of $\Lambda$ (see Figure~\ref{RateRot}), then $x\in\pi(\Lambda)$ if and only if $x\in U\cap\Z^n$. So, if we set $\nu$ the measure of repartition of the $x\in\Z^n$ modulo $\Lambda$, that is\index{$\nu$}
\[\nu = \lim_{R\to+\infty} \frac{1}{\card(B_R\cap \Z^n)} \sum_{x\in B_R\cap \Z^n} \delta_{\operatorname{pr}_{\R^n/\Lambda}(x)},\]
then we obtain the following formula.

\begin{prop}\label{FormTau1}
For every $P\in O_n(\R)$ (we identify $U$ with its projection of $\R^n/\Lambda$),
\[\tau(P) = D\big(\pi(\Lambda)\big) = \nu\big(\operatorname{pr}_{\R^n/\Lambda}(U)\big).\]
\end{prop}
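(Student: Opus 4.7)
The plan is to unpack the definitions; no delicate analysis is required beyond invoking Theorem~\ref{imgquasi} and justifying a weak-convergence passage.

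For the first equality $\tau(P)=D(\pi(\Lambda))$, I would apply Theorem~\ref{imgquasi} to the almost periodic pattern $\Z^n$ (which is even a lattice), obtaining that $\widehat P(\Z^n)=\pi(P(\Z^n))=\pi(\Lambda)$ is itself an almost periodic pattern. Such a set possesses a genuine uniform density, so the $\limsup$ in Definition~\ref{DefTaux} is actually a limit and coincides with $D(\pi(\Lambda))$.

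For the second equality, the key observation is the geometric reformulation of membership in $\pi(\Lambda)$. For $x\in\Z^n$, one has $x\in\pi(\Lambda)$ if and only if there exists $\lambda\in\Lambda$ with $\pi(\lambda)=x$, i.e.\ $x_i-1/2<\lambda_i\le x_i+1/2$ coordinate by coordinate. Up to the measure-zero set of $\lambda$ having a coordinate equal to $x_i\pm 1/2$, this is exactly $x\in B(\lambda,1/2)$, hence $x\in U$. Since $U$ is $\Lambda$-invariant, this condition depends only on the class of $x$ modulo $\Lambda$; writing $V=\operatorname{pr}_{\R^n/\Lambda}(U)$, the equality
\[D(\pi(\Lambda))=\lim_{R\to +\infty}\frac{1}{\card[B_R]}\sum_{x\in[B_R]}\mathbf 1_V\bigl(\operatorname{pr}_{\R^n/\Lambda}(x)\bigr)\]
is then a direct rewriting of the definition of density, and the right-hand side is by construction $\nu(V)$.

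The only step requiring attention — and the main obstacle — is the last one: the indicator $\mathbf 1_V$ is not continuous on the torus, so the value $\nu(V)$ cannot be read off immediately from the weak convergence defining $\nu$. What one must check is that $\nu$ gives no mass to the topological boundary of $V$, which is contained in a finite union of affine hyperplanes of $\R^n/\Lambda$ (the images of the faces of the cubes $B(\lambda,1/2)$), and therefore is Lebesgue-null. By Weyl's equidistribution theorem, $\nu$ is the Haar measure of the closed subgroup of $\R^n/\Lambda$ generated by the images of the canonical basis vectors of $\Z^n$; this subgroup is not contained in any of the hyperplanes in question (since $P$ has full rank), so $\nu$ indeed assigns zero mass to the boundary of $V$. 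This standard justification being in place, everything else is bookkeeping.
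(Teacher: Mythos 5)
The paper does not give a displayed proof of Proposition~\ref{FormTau1}: the statement is presented as a direct consequence of the discussion that precedes it ($x\in\pi(\Lambda)$ iff $x\in U$, hence the density of $\pi(\Lambda)$ is the $\nu$-mass of $\operatorname{pr}_{\R^n/\Lambda}(U)$). Your proposal follows the same route --- invoke Theorem~\ref{imgquasi} to get an honest limit, reformulate membership in $\pi(\Lambda)$ via the cube $U$, then pass to the measure $\nu$ --- and you are right that the only delicate point, which the paper silently elides, is that $\mathbf 1_{\operatorname{pr}(U)}$ is discontinuous, so one must know that $\nu$ gives no mass to $\partial\operatorname{pr}(U)$ before reading the density off the weak limit defining $\nu$.

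The gap is precisely in how you dispose of that last point. You argue that $\nu$ is the Haar measure of the closed subgroup $H=\overline{\operatorname{pr}_{\R^n/\Lambda}(\Z^n)}$, that $H$ is not contained in any of the boundary hyperplanes, and you conclude that $\nu(\partial V)=0$. That implication is not valid in general. ``$H$ not contained in an affine hyperplane $W$'' only forces $H\cap W$ to be a proper closed coset of $H$; when $\dim H>0$ this coset has lower dimension and hence Haar measure zero, but when $\dim H=0$ (i.e.\ $H$ is finite, which is exactly what happens when $P$ has rational entries) a proper subset can still carry positive mass, namely $1/\lvert H\rvert$ per atom. In other words, your argument proves the claim for totally irrational $P$ (where $\nu=\Leb$), but it silently assumes $H$ is connected and positive-dimensional in the general case. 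To make the step rigorous for arbitrary $P\in O_n(\R)$ you would need to show, separately, that no atom of $\nu$ lies on $\partial\operatorname{pr}(U)$, i.e.\ that no $x\in\Z^n$ satisfies $d_\infty(x,\Lambda)=1/2$ exactly --- a number-theoretic statement that does not follow from ``$P$ has full rank'' and in fact requires a genuine argument (for instance $n\ge 4$ admits rational orthogonal matrices with denominator $2$, and points of $\Z^n$ at $\ell^\infty$-distance exactly $1/2$ from $\Lambda$, so the boundary set is not automatically $\nu$-null). Given that the paper only exploits the proposition in the totally irrational case, you could either restrict the statement accordingly, or supply the missing argument in the rational case; as written the justification is incomplete.
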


An even more simple formula holds when the matrix $P$ is totally irrational.

\begin{definition}\label{TotIrrat}
We say that a matrix $P\in O_n(\R)$ is \emph{totally irrational} if the image $P(\Z^n)$ is equidistributed\footnote{It is equivalent to require that it is dense instead of equidistributed.} modulo $\Z^n$; in particular, this is true when the coefficients of $P$ form a $\Q$-free family.
\end{definition}

If the matrix $P$ is totally irrational, then the measure $\nu$ is the uniform measure. Thus, if $\mathcal D$ is a fundamental domain of $\R^n/\Lambda$, then $\tau(P)$ is the area of $\mathcal D \cap U$.

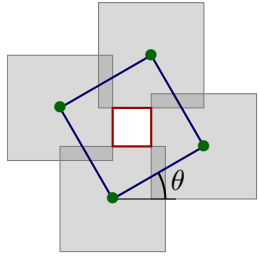
\begin{figure}[ht]
\begin{center}
\begin{tikzpicture}[scale=1.4]
\clip (-1,-.6) rectangle (1.5,2.2);
\foreach\i in {0,...,1}{
\foreach\j in {0,...,1}{
\fill[color=gray,opacity = .3] (-.5+.866*\i-.5*\j,-.5+.5*\i+.866*\j) rectangle (.5+.866*\i-.5*\j,.5+.5*\i+.866*\j);
\draw[color=gray] (-.5+.866*\i-.5*\j,-.5+.5*\i+.866*\j) rectangle (.5+.866*\i-.5*\j,.5+.5*\i+.866*\j);
}}
\draw (0,0) -- (.6,0);
\draw[thick] (.5,0) arc (0:30:.5);
\draw (.62,.18) node {$\theta$};
\draw[color=red!60!black,thick] (0,.866) -- (0,.5) -- (.366,.5) -- (.366,.866) -- cycle;
\draw[color=blue!40!black,thick] (0,0) -- (.866,.5) -- (.866-.5,.866+.5) -- (-.5,.866) -- cycle;
\foreach\i in {0,...,1}{
\foreach\j in {0,...,1}{
\draw[color=green!40!black] (.866*\i-.5*\j,.5*\i+.866*\j) node {$\bullet$};
}}
\end{tikzpicture}
\caption[Mean rate of injectivity of a rotation]{Computation of the mean rate of injectivity of a rotation of $\R^2$: it is equal to $1$ minus the area of the interior of the red square.}\label{RateRot}
\end{center}
\end{figure}

With the same kind of arguments, we easily obtain a formula for $\rho_{\widehat P(\Z^n)}(v)$ (the frequency of the difference $v$ in $\widehat P(\Z^n)$, see Definition~\ref{DefDiff}).

\begin{prop}\label{ActionDiffGeom}
If $P\in GL_n(\R)$ is totally irrational, then for every $v\in\Z^n$,
\[\rho_{\widehat P(\Z^n)}(v) = \Leb\big( B(v,1/2) \cap U\big).\]
\end{prop}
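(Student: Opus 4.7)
My plan is to follow the strategy of the proof of Proposition~\ref{FormTau1}: parametrize $\Gamma := \widehat P(\Z^n) = \pi(\Lambda)$ through $\Lambda = P(\Z^n)$, and track the ``rounding error'' which encodes the local position of each point of $\Gamma$ relative to $\Lambda$. Since $\Gamma = U \cap \Z^n$, the condition $x + v \in \Gamma$ is equivalent to $x + v \in U$, that is, to the existence of $\mu \in \Lambda$ with $|x + v - \mu|_\infty < 1/2$. This condition depends only on $x$ modulo $\Lambda$, so the equidistribution of $\Z^n$ modulo $\Lambda$ (which follows from the total irrationality of $P$) allows me to compute $D(\Gamma \cap (\Gamma - v))$ geometrically, as in Proposition~\ref{FormTau1}.

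I would then write each $\lambda \in \Lambda$ as $\lambda = x + r$ with $x = \pi(\lambda) \in \Z^n$ and $r \in B(0, 1/2)$. Total irrationality of $P$ also gives that the rounding errors $r_\lambda$ are uniformly distributed on $B(0, 1/2)$ as $\lambda$ varies in a large ball of $\Lambda$ (this is equidistribution of $\Lambda$ modulo $\Z^n$). For fixed $\lambda$, the condition ``$\pi(\lambda) + v \in \Gamma$'' reads $\mu - \lambda \in B(v - r, 1/2)$ for some $\mu \in \Lambda$; since $\mu - \lambda$ ranges over all of $\Lambda$ and $U = \Lambda + B(0, 1/2)$, this simplifies to the single condition $v - r \in U$. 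Integrating the indicator of this event against $r$ uniform on $B(0, 1/2)$ and substituting $w = v - r$ (using $\operatorname{Leb}(B(0, 1/2)) = 1$) yields
\[\operatorname{Leb}\bigl(\{ r \in B(0,1/2) : v - r \in U\}\bigr) \;=\; \operatorname{Leb}\bigl(B(v, 1/2) \cap U\bigr),\]
which is the claimed right-hand side.

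The subtle remaining step is to identify this frequency, computed naturally over $\lambda \in \Lambda$, with $\rho_{\widehat P(\Z^n)}(v) = D(\Gamma \cap (\Gamma - v))/D(\Gamma)$, which is a priori a density over $\Z^n$ rather than $\Lambda$. Applying the equidistribution argument of Proposition~\ref{FormTau1} to the $\Lambda$-periodic set $U \cap (U - v)$ gives $D(\Gamma \cap (\Gamma - v)) = \nu\bigl(\operatorname{pr}_{\R^n/\Lambda}(U \cap (U - v))\bigr)$ and $D(\Gamma) = \nu\bigl(\operatorname{pr}_{\R^n/\Lambda}(U)\bigr)$, and I would verify that the resulting torus ratio matches the cube integral $\operatorname{Leb}(B(v, 1/2) \cap U)$ by a direct Fubini-type computation exploiting the $\Lambda$-periodicity of $U$ together with the fact that the cube $B(0, 1/2)$ has Lebesgue measure equal to the covolume of $\Lambda$ (normalized to $1$ in the orthogonal case). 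I expect this final reconciliation to be the main technical obstacle, but to become routine once the rounding-error viewpoint above is in place.
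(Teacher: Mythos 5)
Your proof takes a dual viewpoint to the paper's, and the two are essentially equivalent. The paper fixes $x\in\Z^n$, asks when $x\in\Gamma$ and $x+v\in\Gamma$ by reducing $x$ modulo $\Lambda$, and invokes equidistribution of $\Z^n$ modulo $\Lambda$; you instead fix $\lambda\in\Lambda$, track the rounding error $r_\lambda=\lambda-\pi(\lambda)\in B(0,1/2)$, and invoke equidistribution of $\Lambda$ modulo $\Z^n$. Both reductions land on the same integral $\operatorname{Leb}(B(v,1/2)\cap U)$, and both pass through the same geometric observation that the condition ``$x+v\in\Gamma$'' is equivalent to a condition on the residue living in $U-v$. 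So the core mechanism you use is the one in the paper.

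The one substantive difference is that you explicitly flag, as a ``subtle remaining step,'' the reconciliation between the frequency computed over $\lambda\in\Lambda$ and the quantity $\rho_\Gamma(v)=D(\Gamma\cap(\Gamma-v))/D(\Gamma)$, which is by definition an average over $x\in\Gamma\subset\Z^n$. This is indeed the right thing to worry about: when the cubes $B(\lambda,1/2)$ overlap (which is the generic situation, since for a non-trivial isometry $D(\widehat P(\Z^n))<1$), the map $\pi|_\Lambda$ is not injective, so averaging over $\lambda\in\Lambda$ weights each $x\in\Gamma$ by its multiplicity $\#\{\lambda:\pi(\lambda)=x\}$, whereas $\rho_\Gamma$ weights each $x$ equally. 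Your expectation that this becomes ``routine once the rounding-error viewpoint is in place'' is optimistic --- it is precisely the point where the identity is delicate, and the paper's own (quite terse) sketch also treats the residue of $x\in\Gamma$ as uniformly distributed on $B(0,1/2)$ rather than on $\operatorname{pr}(U)$, thereby eliding the same multiplicity question. In short: same method as the paper, and the gap you identify but do not close is a genuine one that the paper's sketch also leaves implicit; calling it routine understates it.
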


\begin{proof}[Sketch of proof of Proposition~\ref{ActionDiffGeom}]
We want to know which proportion of points $x\in\Gamma = \widehat P(\Z^n)$ are such that $x+v$ also belongs to $\Gamma$. But modulo $\Lambda = P(\Z^n)$, $x$ belongs to $\Gamma$ if and only if $x\in B(0,1/2)$. Similarly, $x+v$ belongs to $\Gamma$ if and only if $x\in B(-v,1/2)$. Thus, by equirepartition, $\rho_{\widehat P(\Z^n)}(v)$ is equal to the area of $B(v,1/2) \cap U$.
\end{proof}

From Proposition~\ref{FormTau1}, we deduce the continuity of $\overline\tau$. More precisely, $\overline\tau$ is continuous and piecewise polynomial of degree smaller than $n$; moreover $\tau$ coincides with a continuous function on a generic subset of $O_n(\R)$.

It also allows to compute simply the mean rate of injectivity of some examples of matrices: for $\theta\in[0,\pi/2]$, the mean rate of injectivity of a rotation of $\R^2$ of angle $\theta$ is (see Figure \ref{RateRot}).
\[\overline\tau(R_\theta) = 1-(\cos(\theta)+\sin(\theta)-1)^2.\]

\subsection{Diffusion process}

In this paragraph, we study the action of a discretization of a matrix on the set of differences of an almost periodic pattern $\Gamma$; more precisely, we study the link between the functions $\rho_\Gamma$ and $\rho_{\widehat P(\Gamma)}$.

For $u\in\R^n$, we define the function $\varphi_u$, which is a ``weighted projection'' of $u$ on $\Z^n$.

\begin{definition}
Given $u\in\R^n$, the function $\varphi_u = \Z^n\to [0,1]$ is defined by
\[\varphi_u (v) = \left\{\begin{array}{ll}
0 & \ \text{if}\ d_\infty(u,v)\ge 1\\
\prod_{i=1}^n (1-|u_i+v_i|) & \ \text{if}\ d_\infty(u,v)< 1.
\end{array}\right.\]
\end{definition}

\begin{figure}
\begin{minipage}[b]{.4\linewidth}
\begin{center}
\begin{tikzpicture}[scale=.85]
\draw (0,0) rectangle (3,3);
\draw[thick, color=blue!80!black] (1.3,0.9) rectangle (3,3);
\draw (1.3,0.9) node {$\times$};
\draw (1.3,0.9) node[above left] {$u$};
\draw[color=green!40!black] (0,0) node {$\bullet$};
\draw (0,0) node[below right] {$v$};
\draw[color=green!40!black] (0,3) node {$\bullet$};
\draw (0,3) node[above right] {$v+(0,1)$};
\draw[color=green!40!black] (3,0) node {$\bullet$};
\draw (3,0) node[below right] {$v+(1,0)$};
\draw[color=green!40!black] (3,3) node {$\bullet$};
\draw (3,3) node[above right] {$v+(1,1)$};
\draw[->] (1.35,.75) to[bend right] (2.9,.1);
\draw[->] (1.35,1.05) to[bend left] (2.9,2.9);
\draw[->] (1.25,.75) to[bend left] (.1,.1);
\draw[->] (1.25,1.05) to[bend right] (.1,2.9);
\draw[->, color=red!60!black] (.1,.1) to (1.2,0.8);
\draw[color=blue!80!black] (1.3,0) -- (1.3,3);
\draw[color=blue!80!black] (0,.9) -- (3,.9);
\end{tikzpicture}
\end{center}
\caption[The function $\varphi_u$]{The function $\varphi_u$ in dimension 2: its value on one vertex of the square is equal to the area of the opposite rectangle; in particular, $\varphi_u(v)$ is the area of the rectangle with the vertices $u$ and $v+(1,1)$ (in bold).}\label{IxMa}
\end{minipage}\hfill
\begin{minipage}[b]{.55\linewidth}
\begin{center}
\begin{tikzpicture}[scale=.8]
\draw[thick, color=blue!80!black] (1.7,2.1) rectangle (0,0);
\draw[color=blue!80!black] (1.7,0) -- (1.7,3);
\draw[color=blue!80!black] (0,2.1) -- (3,2.1);
\draw[color=blue!80!black] (0,0) rectangle (3,3);
\draw (1.7,2.1) node {$\times$};
\draw (1.7,2.1) node[above left] {$u'$};

\draw[dashed] (0,0) rectangle (6,6);
\draw[dashed] (3,0) -- (3,6);
\draw[dashed] (0,3) -- (6,3);
\draw (1.5,1.5) node[below left] {\small$(0,0)$};
\draw (4.5,1.5) node[below right] {\small$(1,0)$};
\draw (1.5,4.5) node[above left] {\small$(0,1)$};
\draw (4.5,4.5) node[above right] {\small$(1,1)$};
\draw[->, color=red!60!black] (1.8,2.2) to (2.9,2.9);
\draw[color=green!40!black] (1.5,1.5) node {$\bullet$};
\draw[color=green!40!black] (1.5,4.5) node {$\bullet$};
\draw[color=green!40!black] (4.5,1.5) node {$\bullet$};
\draw[color=green!40!black] (4.5,4.5) node {$\bullet$};
\end{tikzpicture}
\end{center}
\caption[Proof of Proposition~\ref{ActionDiff}]{The red vector is equal to that of Figure~\ref{IxMa} for $u=Pv$. If $Px$ belongs to the bottom left rectangle, then $\pi(Px+Pv) = y\in\Z^2$; if $Px$ belongs to the top left rectangle, then $\pi(Px+Pv) = y+(0,1)$ etc.}\label{IxMa2}
\end{minipage}
\end{figure}

In particular, the function $\varphi_u$ satisfies $\sum_{v\in\Z^n} \varphi_u(v) = 1$, and is supported by the vertices of the integral unit cube\footnote{An integral cube has vertices with integer coordinates and its faces parallel to the canonical hyperplanes of $\R^n$.} that contains\footnote{More precisely, the support of $\varphi_u$ is the smallest integral unit cube of dimension $n'\le n$ which contains $u$.} $u$. Figure~\ref{IxMa} gives a geometric interpretation of this function $\varphi_u$.

The following property asserts that the discretization $\widehat P$ acts ``smoothly'' on the frequency of differences. In particular, when $D(\Gamma) = D(\widehat P \Gamma)$, the function $\rho_{\widehat P \Gamma}$ is obtained from the function $\rho_\Gamma$ by applying a linear operator $\mathcal A$, acting on each Dirac function $\delta_v$ such that $\mathcal A \delta_u(v) = \varphi_{P(u)}(v)$. Roughly speaking, to compute $\mathcal A \delta_v$, we take $\delta_{Pv}$ and apply a diffusion process. In the other case where $D(\widehat P \Gamma) < D(\Gamma)$, we only have inequalities involving the operator $\mathcal A$ to compute the function $\rho_{\widehat P\Gamma}$.

\begin{prop}\label{ActionDiff}
Let $\Gamma\subset \Z^n$ be an almost periodic pattern and $P\in O_n(\R)$ be a generic matrix.
\begin{enumerate}[(i)]
\item If $D(\widehat P(\Gamma)) = D(\Gamma)$, then for every $u\in\Z^n$,
\[\rho_{\widehat P(\Gamma)}(u) = \sum_{v\in\Z^n} \varphi_{P(v)} (u) \rho_\Gamma(v).\]
\item In the general case, for every $u\in\Z^n$, we have
\[\frac{D(\Gamma)}{D(\widehat P(\Gamma))}\sup_{v\in\Z^n} \varphi_{P(v)} (u) \rho_\Gamma(v) \le \rho_{\widehat P(\Gamma)}(u) \le \frac{D(\Gamma)}{D(\widehat P(\Gamma))}\sum_{v\in\Z^n} \varphi_{P(v)} (u) \rho_\Gamma(v).\]
\end{enumerate}
\end{prop}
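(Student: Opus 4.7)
The plan is to convert $\rho_{\widehat P(\Gamma)}(u)$ into a density computable from the equidistribution of $P$-orbits modulo $\Z^n$, in the spirit of the sketch of Proposition~\ref{ActionDiffGeom}. First I would carry out the local computation: fix $v \in \Z^n$ and $x \in \Gamma \cap (\Gamma - v)$, and write $Px = \pi(Px) + \alpha$ where $\alpha \in [-1/2, 1/2]^n$ is the fractional part, and similarly $Pv = \pi(Pv) + \beta$. Then
\[\widehat P(x+v) - \widehat P(x) \;=\; \pi(Px + Pv) - \pi(Px) \;=\; \pi(Pv) + \pi(\alpha + \beta),\]
where each coordinate of $\pi(\alpha + \beta)$ lies in $\{-1, 0, 1\}$. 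A direct coordinate-by-coordinate analysis shows that the event $\widehat P(x+v) - \widehat P(x) = u$ is equivalent to $\alpha$ belonging to a product of subintervals of $[-1/2, 1/2]$ whose total volume is exactly $\varphi_{Pv}(u)$; this is the geometric content of Figure~\ref{IxMa}.

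Next, the genericity of $P$ forces $P(\Z^n)$ to be equidistributed modulo $\Z^n$, and the almost-periodicity of $\Gamma$ lets me extend this to the conditional equidistribution of the fractional parts of $Px$ for $x$ ranging over $\Gamma \cap (\Gamma - v)$, for each $v$. Combined with the first step, the density of those $x \in \Gamma \cap (\Gamma - v)$ for which $\widehat P(x+v) - \widehat P(x) = u$ is then exactly $D(\Gamma)\, \rho_\Gamma(v)\, \varphi_{Pv}(u)$. Summing over $v$, the total density of pairs $(x, v)$ with $x, x+v \in \Gamma$ and $\widehat P(x+v) - \widehat P(x) = u$ equals $D(\Gamma) \sum_v \rho_\Gamma(v) \varphi_{Pv}(u)$.

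Case (i) now follows: the hypothesis $D(\widehat P(\Gamma)) = D(\Gamma)$ forces $\widehat P$ to be injective on a full-density subset of $\Gamma$, so these pairs are in density-preserving bijection with the pairs $(x', x'+u) \in \widehat P(\Gamma)^2$, and the identity drops out. For case (ii), the upper bound is obtained in the same way, using that each $x' \in \widehat P(\Gamma) \cap (\widehat P(\Gamma) - u)$ arises from at least one such pair. For the lower bound, fixing $v$ and setting $A_v = \{x \in \Gamma \cap (\Gamma-v) : \widehat P(x+v) - \widehat P(x) = u\}$, I would aim to show that $\widehat P$ is essentially injective on $A_v$; this gives $D(A_v) = D(\widehat P(A_v)) \leq D(\widehat P(\Gamma) \cap (\widehat P(\Gamma) - u))$ and hence the claimed inequality after dividing by $D(\widehat P(\Gamma))$.

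The main obstacle is precisely this essential injectivity on $A_v$. The key observation should be that the multiplicity $m(x') = \card(\widehat P^{-1}(x') \cap \Gamma)$ is determined by the remainder class of $P^{-1}x'$ modulo $\Z^n$, and collisions ($m(x') > 1$) occur only when this remainder lies in specific regions of $[-1/2, 1/2]^n$ coming from the lattice points $w \in \Z^n$ with $|Pw|_\infty < 1$. On the other hand, the condition defining $A_v$ constrains $\alpha = \{Px\}$ to a specific box of measure $\varphi_{Pv}(u)$. Arguing that, for a generic $P$, this box is disjoint from the collision regions (or at least that the overlap has negligible contribution) would give the required injectivity and close the proof.
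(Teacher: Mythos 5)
Your treatment of part (i) and of the upper bound in part (ii) follows essentially the paper's route: the coordinate-by-coordinate computation identifying the event $\widehat P(x+v)-\widehat P(x)=u$ with the fractional part $\{Px\}$ lying in a box of volume $\varphi_{Pv}(u)$, then equidistribution of $\{Px\}$ over $x\in\Gamma\cap(\Gamma-v)$ via genericity (Lemma~\ref{passifacil}), then density bookkeeping over $v$. That part is correct and matches the paper.

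The lower bound in (ii) is where a genuine gap appears, and you correctly isolate it: the argument needs $\widehat P$ to be essentially injective on $A_v$. But the repair you sketch cannot work. Both the box $Q_v^u$ (of measure $\varphi_{Pv}(u)$) and each collision region $C_w=[-1/2,1/2)^n\cap\big([-1/2,1/2)^n-Pw\big)$ for $\|Pw\|_\infty<1$ are boxes of positive volume in the torus, and for a generic $P$ they overlap on a set of positive measure; there is no reason for the overlap to be negligible, and a perturbation argument will not change that. Indeed the injectivity you want already fails at $v=0$, where $A_0=\Gamma$ and $\widehat P$ is not injective as soon as the density drops; correspondingly the stated inequality itself is suspect at $u=0$: taking $v=0$ gives $\varphi_{0}(0)\rho_\Gamma(0)=1$, so the claimed lower bound would force $1=\rho_{\widehat P\Gamma}(0)\ge D(\Gamma)/D(\widehat P\Gamma)$, i.e.\ $D(\widehat P\Gamma)\ge D(\Gamma)$, which is false whenever $\widehat P$ collapses points. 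The paper's own proof of (ii) is a single sentence (``we can easily prove the second part'') and does not address this; the application in Lemma~\ref{EstimPerteRot} only uses the weaker bound $\rho_{\widehat P\Gamma}(u)\ge\varphi_{Pv}(u)\rho_\Gamma(v)$ for a specific nonzero $v$, without the prefactor $D(\Gamma)/D(\widehat P\Gamma)$, and even that weaker statement still requires the injectivity-on-$A_v$ step you have not supplied. One further caveat: your remark that the multiplicity $m(x')$ is determined by $\{P^{-1}x'\}$ modulo $\Z^n$ is valid only for $\Gamma=\Z^n$; for a general almost periodic $\Gamma$ the preimage count also depends on which candidate preimages actually belong to $\Gamma$.
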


\begin{proof}[Proof of Proposition \ref{ActionDiff}]
We begin by proving the first point of the proposition. Suppose that $P\in O_n(\R)$ is generic and that $D(\widehat P(\Gamma)) = D(\Gamma)$. Let $x\in \Gamma\cap (\Gamma-v)$. We consider the projection $y'$ of $y=Px$, and the projection $u'$ of $u=Pv$, on the fundamental domain $]-1/2,1/2]^n$ of $\R^n/\Z^n$. We have
\[P(x+v) = \pi(Px) + \pi(Pv) + y' + u'.\]
Suppose that $y'$ belongs to the parallelepiped whose vertices are $(-1/2,\cdots,-1/2)$ and $u'$ (in bold in Figure~\ref{IxMa2}), then $y'+u'\in [-1/2,1/2[^n$. Thus, $\pi(P(x+v)) = \pi(Px) + \pi(Pv)$. The same kind of results holds for the other parallelepipeds whose vertices are $u'$ and one vertex of $[-1/2,1/2[^n$.

We set $\Gamma = \widehat P(\Z^n)$. The genericity of $P$ ensures that for every $v\in\Z^n$, the set $\Gamma\cap (\Gamma-v)$, which has density $D(\Gamma)\rho_\Gamma(v)$ (by definition of $\rho_\Gamma$), is equidistributed modulo $\Z^n$ (by Lemma~\ref{passifacil}). Thus, the points $x'$ are equidistributed modulo $\Z^n$. In particular, the difference $v$ will spread into the differences which are the support of the function $\varphi_{Pv}$, and each of them will occur with a frequency given by $\varphi_{Pv} (x) \rho_\Gamma(v)$. The hypothesis about the fact that the density of the sets does not decrease imply that the contributions of each difference of $\Gamma$ to the differences of $\widehat P (\Gamma)$ add.

In the general case, the contributions to each difference of $\Gamma$ may overlap. However, applying the argument of the previous case, we can easily prove the second part of the proposition.
\end{proof}

\begin{rem}\label{RemActionDiff}
We also remark that:
\begin{enumerate}[(i)]
\item the density strictly decreases (that is, $D(\widehat P(\Gamma)) < D(\Gamma)$) if and only if there exists $v_0\in\Z^n$ such that $\rho_\Gamma(v_0)>0$ and $\|Pv_0\|_\infty <1$;
\item if there exists $v_0\in\Z^n$ such that 
\[\sum_{v\in\Z^n} \varphi_{P(v)} (v_0) \rho_\Gamma(v_0)>1,\]
then the density strictly decreases by at least $\sum_{v\in\Z^n} \varphi_{P(v)} (v_0) \rho_\Gamma(v_0)- 1$.	
\end{enumerate}
\end{rem}

\subsection{Rate of injectivity of a generic sequence of isometries}

We now come to the proof of the main theorem of this paper (Theorem~\ref{DD}). We will begin by applying the Minkowski theorem for almost periodic patterns (Theorem~\ref{MinkAlm}), which gives \emph{one} nonzero difference whose frequency is positive. The rest of the proof of Theorem~\ref{DD} consists in using again an argument of equidistribution. More precisely, we apply successively the following lemma, which asserts that given an almost periodic pattern $\Gamma$ of density $D_0$, a sequence of isometries and $\delta>0$, then, perturbing each isometry of at most $\delta$ if necessary, we can make the density of the $k_0$-th image of $\Gamma$ smaller than $\lambda_0 D_0$, with $k_0$ and $\lambda_0$ depending only on $D_0$ and $\delta$. The proof of this lemma involves the study of the action of the discretizations on differences made in Proposition~\ref{ActionDiff}

\begin{lemme}\label{EstimPerteRot}
Let $(P_k)_{k\ge 1}$ be a sequence of matrices of $O_n(\R)$ and $\Gamma\subset\Z^n$ an almost periodic pattern. Given $\delta>0$ and $D>0$ such that $D(\Gamma)\ge D$, there exists $k_0 = k_0(D)$ (decreasing in $D$), $\lambda_0 = \lambda_0(D,\delta)<1$ (decreasing in $D$ and in $\delta$), and a sequence $(Q_k)_{k\ge 1}$ of totally irrational matrices of $O_n(\R)$, such that $\|P_k - Q_k\|\le \delta$ for every $k\ge 1$ and
\[D\big((\widehat{Q_{k_0}}\circ\dots\circ \widehat{Q_1})(\Gamma)\big)< \lambda_0 D(\Gamma).\]
\end{lemme}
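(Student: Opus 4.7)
The plan is to apply the Minkowski-type theorem for almost periodic patterns (Theorem~\ref{MinkAlm}) to extract a short nonzero difference $v_0\in\Z^n$ of positive frequency in $\Gamma$, then to follow a descending sequence of ``children'' of $v_0$ through the diffusion of Proposition~\ref{ActionDiff}, perturbing each $P_k$ within budget $\delta$ so that after $k_0=k_0(D)$ steps a collision is forced, at which point Remark~\ref{RemActionDiff}(ii) delivers the claimed density drop.

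\emph{Step 1: extraction.} Set $k=\lceil 1/D\rceil$ and let $S=B_R$ with $R=R(D)$ chosen so that $(2R)^n>4^n k$. Theorem~\ref{MinkAlm} then produces a nonzero $v_0\in[B_R]$ with $\rho_\Gamma(v_0)\ge D(\Gamma)/2\ge D/2$, and $\|v_0\|_2^2\le nR^2$ is an integer bounded in terms of $D$ alone. For $Q_k=P_k$ when $k>k_0$ the perturbation bound is trivially satisfied, so only the first $k_0$ matrices need to be perturbed.

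\emph{Step 2: iterated descent.} Writing $\Gamma_j = (\widehat{Q_j}\circ\cdots\circ\widehat{Q_1})(\Gamma)$, we inductively pick totally irrational $Q_j\in O_n(\R)$ with $\|Q_j-P_j\|\le\delta$ and define $v_j\in\Z^n$ by rounding each coordinate of $Q_jv_{j-1}$ toward $0$. Then $v_j$ is a vertex of the integer unit cube containing $Q_jv_{j-1}$, so $\varphi_{Q_jv_{j-1}}(v_j)>0$, and iterating the lower bound in Proposition~\ref{ActionDiff}(ii) yields
\[\rho_{\Gamma_j}(v_j)\;\ge\;\frac{D}{2}\prod_{i=1}^{j}\varphi_{Q_iv_{i-1}}(v_i).\]
The key observation is that $\|v_{j-1}\|_2^2$ and $\|v_j\|_2^2$ are non-negative integers with $\|v_j\|_2<\|v_{j-1}\|_2$ strictly (since total irrationality forces $Q_jv_{j-1}\notin\Z^n$ for $v_{j-1}\ne 0$, so at least one coordinate of $Q_jv_{j-1}$ has nonzero fractional part, making the rounded vertex strictly smaller in some coordinate), hence $\|v_j\|_2^2\le\|v_{j-1}\|_2^2-1$. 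Thus in at most $k_0-1:=nR^2$ iterations we reach $\|v_{k_0-1}\|_2=1$, i.e.\ $v_{k_0-1}$ is (up to sign and permutation) a standard basis vector, and $k_0=k_0(D)$ depends only on $D$.

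\emph{Step 3: forcing the collision.} Since $Q_{k_0}$ is totally irrational and $\|v_{k_0-1}\|_2=1$, the vector $Q_{k_0}v_{k_0-1}$ has all coordinates in $(-1,1)$; by further using the $\delta$-freedom in $Q_{k_0}$ we ensure each coordinate is bounded away from $\pm 1$ by a definite amount, giving $\varphi_{Q_{k_0}v_{k_0-1}}(0)\ge c_1(\delta)>0$. Remark~\ref{RemActionDiff}(ii) applied at $u=0$ then yields a density drop of at least
\[\varphi_{Q_{k_0}v_{k_0-1}}(0)\cdot\rho_{\Gamma_{k_0-1}}(v_{k_0-1})\;\ge\;c_1(\delta)\,\eta(\delta)^{k_0-1}\,\frac{D}{2},\]
whence $D(\Gamma_{k_0})\le\lambda_0\,D(\Gamma)$ for some $\lambda_0(D,\delta)<1$. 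The main obstacle is precisely securing the uniform lower bound $\eta(\delta)>0$ on the earlier weights $\varphi_{Q_jv_{j-1}}(v_j)$: the round-toward-origin rule that guarantees the strict descent $\|v_j\|_2^2\le\|v_{j-1}\|_2^2-1$ can pick a vertex at which this weight is arbitrarily small whenever some fractional part of $Q_jv_{j-1}$ is close to $1$. The resolution is to use the $\delta$-ball of freedom in each $Q_j$ (which affords positional shifts of $Q_jv_{j-1}$ by up to $\delta\|v_{j-1}\|_2\ge\delta$) to steer $Q_jv_{j-1}$ into a favorable region of its integer cube, away from those ``bad'' faces; this geometric balancing act, whose cost is borne by the $\delta$-dependence of $\lambda_0$, is the technical heart of the argument.
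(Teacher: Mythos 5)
Your strategy follows the paper's almost step for step: apply Theorem~\ref{MinkAlm} to extract a short nonzero difference $u_0$ with $\rho_\Gamma(u_0)\ge D(\Gamma)/2$, push it through the ``round toward the origin'' maps $\lfloor Q_j\rfloor$, observe via total irrationality that $\|v_j\|_2^2$ is a strictly decreasing sequence of non-negative integers bounded by $\|u_0\|_2^2$ (hence $k_0=k_0(D)$), and finish with Remark~\ref{RemActionDiff}(ii) once $v_{k_0}=0$. However, there is a genuine gap precisely where you flag it: the uniform lower bound $\eta(\delta)>0$ on the weights $\varphi_{Q_j v_{j-1}}(v_j)$ is asserted, not proved — you write that ``the resolution is to use the $\delta$-ball of freedom to steer $Q_j v_{j-1}$ away from the bad faces'' and call it the technical heart, which is exactly right, but this is the content of the lemma and cannot be left as a plan. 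The paper handles it non-adaptively: since the set of $Q\in O_n(\R)$ mapping some nonzero $x\in [B'_R]$ into $\Z^n$ is a finite union of sets of positive codimension, there is a single constant $d_0(R,\delta)>0$ such that every $P\in O_n(\R)$ admits a $\delta$-close $Q$ with $d(Q(x),\Z^n)>d_0(R,\delta)$ \emph{simultaneously for all} $x\in [B'_R]\setminus\{0\}$; because the descent keeps every $v_j$ inside $B'_R$ (norms strictly decrease), this gives $\varphi_{Q_j v_{j-1}}(v_j)\ge (d_0(R,\delta))^n$ at every step with no steering. Your adaptive version (perturb $Q_j$ after $v_{j-1}$ is revealed) would also work, but you would still need to produce the quantitative clearance and verify it is compatible with total irrationality and equidistribution.

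A smaller imprecision: the integer sequence $\|v_j\|_2^2$ need not pass through the value $1$ before reaching $0$ — for $n\ge 3$ a vector with $\|w\|_2^2\ge 2$ can have all coordinates in $(-1,1)$, so the descent can jump straight to $v_j=0$. You should not assume $v_{k_0-1}$ is a standard basis vector; the clean formulation, as in the paper, is simply to iterate until the first index $k_0$ with $v_{k_0}=0$ and apply Remark~\ref{RemActionDiff}(ii) at that step.
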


We begin by proving that this lemma implies Theorem~\ref{DD}.

\begin{proof}[Proof of Theorem \ref{DD}]
Suppose that Lemma~\ref{EstimPerteRot} is true. Let $\tau_0\in ]0,1[$ and $\delta>0$. We want to prove that we can perturb the sequence $(P_k)_k$ into a sequence $(Q_k)_k$ of isometries, which is $\delta$-close to $(P_k)_k$ and is such that its asymptotic rate is smaller than $\tau_0$ (and that this remains true on a whole neighbourhood of these matrices).

Thus, we can suppose that $\tau^\infty((P_k)_k)>\tau_0$. We apply Lemma~\ref{EstimPerteRot} to obtain the parameters $k_0 = k_0(\tau_0/2)$ (because $k_0(D)$ is decreasing in $D$) and $\lambda_0 = \lambda_0(\tau_0/2,\delta)$ (because $\lambda_0(D,\delta)$ is decreasing in $D$). Applying the lemma $\ell$ times, this gives a sequence $(Q_k)_k$ of isometries, which is $\delta$-close to $(P_k)_k$, such that, as long as $\tau^{\ell k_0}(Q_0,\cdots,Q_{\ell k_0})> \tau_0/2$, we have $\tau^{\ell k_0}(Q_1,\cdots,Q_{\ell k_0})<\lambda_0^\ell D(\Z^n)$. But for $\ell$ large enough, $\lambda_0^\ell<\tau_0$, which proves the theorem.
\end{proof}

\begin{proof}[Proof of Lemma \ref{EstimPerteRot}]
The idea of the proof is the following. Firstly, we apply the Minkowski-type theorem for almost periodic patterns (Theorem~\ref{MinkAlm}) to find a uniform constant $C>0$ and a point $u_0\in\Z^n\setminus\{0\}$ whose norm is ``not too big'', such that $\rho_\Gamma (u_0) > C D(\Gamma)$. Then, we apply Proposition \ref{ActionDiff} to prove that the difference $u_0$ in $\Gamma$ eventually goes to 0; that is, that there exists $k_0\in\N^*$ and an almost periodic pattern $\widetilde\Gamma$ of positive density (that can be computed) such that there exists a sequence $(Q_k)_k$ of isometries, with $\|Q_i-P_i\|\le\delta$, such that for every $x\in \widetilde\Gamma$,
\[(\widehat{Q_{k_0}}\circ\dots\circ \widehat{Q_1})(x) = (\widehat{Q_{k_0}}\circ\dots\circ \widehat{Q_1})(x+u_0).\]
This makes the density of the $k_0$-th image of $\Gamma$ decrease:
\[D\big((\widehat{Q_{k_0}}\circ\dots\circ \widehat{Q_1})(\Gamma)\big) \le D(\Gamma)-D(\widetilde \Gamma);\]
a precise estimation of the density of $\widetilde\Gamma$ will then prove the lemma.
\medskip

We begin by applying the Minkowski-like theorem for almost periodic patterns (Theorem \ref{MinkAlm}) to a \emph{Euclidean} ball $B'_R$\index{$B'_R$} such that (recall that $[B]$ denotes the set of integer points inside $B$)
\begin{equation}\label{superEq}
\Leb(B'_R) = V_n R^n = 4^n\left\lfloor\frac{1}{D(\Gamma)}\right\rfloor,
\end{equation}
where $V_n$ denotes the measure of the unit ball on $\R^n$. Then, Theorem \ref{MinkAlm} says that there exists $u_0\in B'_R\cap \Z^n\setminus\{0\}$ such that
\begin{equation}\label{EqDefU0}
\rho_\Gamma(u_0)\ge \frac{D(\Gamma)}{2}.
\end{equation}

We now perturb each matrix $P_k$ into a totally irrational matrix $Q_k$ such that for every point $x\in [B'_R]\setminus\{0\}$, the point $Q_k (x)$ is far away from the lattice $\Z^n$. More precisely, as the set of matrices $Q\in O_n(\R)$ such that $Q([B'_R]) \cap \Z^n \neq \{0\}$ is finite, there exists a constant $d_0(R,\delta)$ such that for every $P\in O_n(\R)$, there exists $Q\in O_n(\R)$ such that $\|P-Q\|\le\delta$ and for every $x\in [B'_R]\setminus\{0\}$, we have $d_\infty(Q(x),\Z^n)> d_0(R,\delta)$. Applying Lemma~\ref{passifacil} (which states that if the sequence $(Q_k)_k$ is generic, then the matrices $Q_k$ are ``non resonant''), we build a sequence $(Q_k)_{k\ge 1}$ of totally irrational\footnote{See Definition~\ref{TotIrrat}.} matrices of $O_n(\R)$ such that for every $k\in\N^*$, we have:
\begin{itemize}
\item $\|P_k-Q_k\|\le\delta$;
\item for every $x\in [B'_R]\setminus\{0\}$, we have $d_\infty(Q_k(x),\Z^n)> d_0(R,\delta)$;
\item the set $(Q_k\circ \widehat{Q_{k-1}} \circ \cdots\circ \widehat{Q_1})(\Gamma)$ is equidistributed modulo $\Z^n$.
\end{itemize}

We then consider the difference $u_0$ (given by Equation~\eqref{EqDefU0}). We denote by $\lfloor P \rfloor (u)$ the point of the smallest integer cube of dimension $n'\le n$ that contains $P(u)$ which has the smallest Euclidean norm (that is, the point of the support of $\varphi_{P(u)}$ with the smallest Euclidean norm). In particular, if $P(u)\notin \Z^n$, then $\|\lfloor P\rfloor (u)\|_2 < \|P(u)\|_2$ (where $\|\cdot\|_2$ is the Euclidean norm). Then, the point (ii) of Proposition \ref{ActionDiff} shows that 
\begin{align*}
\rho_{\widehat{Q_1}(\Gamma)}(\lfloor Q_1 \rfloor (u_0)) & \ge \frac{D(\Gamma)}{D(\widehat{Q_1}(\Gamma))} \varphi_{Q_1(\lfloor Q_1 \rfloor (u_0))} (u_0) \rho_\Gamma(u_0)\\
     & \ge \frac{\big(d_0(R,\delta)\big)^n}{2}D(\Gamma),
\end{align*}
(applying Equation~\eqref{EqDefU0}) and so on, for every $k\in\N^*$,
\[\rho_{(\widehat{Q_k}\circ\cdots\circ \widehat{Q_1})(\Gamma)}\big((\lfloor Q_{k} \rfloor \circ \cdots \circ \lfloor Q_1 \rfloor )(u_0)\big) \ge \left(\frac{\big(d_0(R,\delta)\big)^n}{2}\right)^k D(\Gamma).\]

We then remark that the sequence 
\[\big\|(\lfloor Q_{k} \rfloor \circ \cdots \circ \lfloor Q_1 \rfloor )(u_0)\big\|_2\]
is decreasing and can only take a finite number of values (it lies in $\sqrt\Z$). Then, there exists $k_0\le R^2$ such that
\[\big(\lfloor Q_{k_0} \rfloor \circ \cdots \circ \lfloor Q_1 \rfloor \big) (u_0) = 0;\]
in particular, by Equation \eqref{superEq}, we have
\[k_0\le \left(\frac{4^n}{V_n}\left\lfloor\frac{1}{D(\Gamma)}\right\rfloor\right)^{2/n}.\]
Then, point (ii) of Remark \ref{RemActionDiff} applied to $v_0 = 0$ implies that the density of the image set satisfies
\[D\big((\widehat{Q_k}\circ\cdots\circ \widehat{Q_1})(\Gamma)\big) \le \left(1-\left(\frac{\big(d_0(R,\delta)\big)^n}{2}\right)^{k_0}\right) D(\Gamma).\]
The conclusions of the lemma are obtained by setting $\lambda_0 = 1-\left(\frac{(d_0(R,\delta))^n}{2}\right)^{k_0}$.
\end{proof}

\appendix

\section{Technical lemmas}\label{AppenTech}

Let us begin by the proof of Proposition \ref{IntRho}.

\begin{proof}[Proof of Proposition \ref{IntRho}]
This proof lies primarily in an inversion of limits.

Let $\varep>0$. As $\Gamma$ is an almost periodic pattern, there exists $R_0>0$ such that for every $R\ge R_0$ and every $x\in\R^n$, we have
\begin{equation}\label{eqDens}
\left|D(\Gamma) - \frac{\Gamma \cap [B(x,R)]}{\card[B_R]}\right|\le \varep.
\end{equation}

So, we choose $R\ge R_0$, $x\in\Z^n$ and compute
\begin{align*}
\frac{1}{\card[B_R]} & \sum_{v\in[B(x,R)]} \rho_\Gamma(v) = \frac{1}{\card[B_R]}\sum_{v\in[B(x,R)]} \frac{D\big((\Gamma-v)\cap \Gamma\big)}{D(\Gamma)}\\
       = & \frac{1}{\card[B_R]}\sum_{v\in[B(x,R)]} \lim_{R'\to +\infty}\frac{1}{\card[B_{R'}]}\sum_{y\in[B_{R'}]} \frac{\1_{y\in\Gamma-v} \1_{y\in\Gamma}}{D(\Gamma)}\\
       = & \frac{1}{D(\Gamma)}\lim_{R'\to +\infty}\frac{1}{\card[B_{R'}]} \sum_{y\in[B_{R'}]} \1_{y\in\Gamma}\frac{1}{\card[B_R]}\sum_{v\in[B(x,R)]} \1_{y\in\Gamma-v}\\
       = & \frac{1}{D(\Gamma)}\underbrace{\lim_{R'\to +\infty}\frac{1}{\card[B_{R'}]} \sum_{y\in[B_{R'}]} \1_{y\in\Gamma}}_{\text{first term}}\underbrace{\frac{1}{\card[B_R]}\sum_{v'\in[B(y+x,R)]} \1_{v'\in\Gamma}}_{\text{second term}}.
\end{align*}
By Equation \eqref{eqDens}, the second term is $\varep$-close to $D(\Gamma)$. Considered independently, the first term is equal to $D(\Gamma)$ (still by Equation \eqref{eqDens}). Thus, we have
\[\left|\frac{1}{\card[B(x,R)]} \sum_{v\in[B(x,R)]} \rho_\Gamma(v) - D(\Gamma)\right|\le \varep,\]
that we wanted to prove.
\end{proof}

We now state an easy lemma which asserts that for $\varep$ small enough, the set of translations $\mathcal N_\varep$ is ``stable under additions with a small number of terms''.

\begin{lemme}\label{arithProg}
Let $\Gamma$ be an almost periodic pattern, $\varep>0$ and $\ell\in\N$. Then if we set $\varep'=\varep/\ell$ and denote by $\mathcal N_{\varep'}$ the set of translations of $\Gamma$ and $R_{\varep'}>0$ the corresponding radius for the parameter $\varep'$, then for every $k\in\llbracket 1,\ell \rrbracket$ and every $v_1,\cdots,v_\ell\in\mathcal N_{\varep'}$, we have
\[\forall R\ge R_{\varep'},\  D_R^+\Big( \big(\Gamma+\sum_{i=1}^\ell v_i\big)\Delta \Gamma \Big) <\varep.\]
\end{lemme}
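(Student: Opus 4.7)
My plan is to exploit subadditivity of the symmetric difference under composition of translations, combined with translation invariance of the uniform upper density $D_R^+$.

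First, I would record the triangle inequality for symmetric differences: for any sets $A, B, C$, one has $A \Delta C \subseteq (A \Delta B) \cup (B \Delta C)$. Applying this with $A = \Gamma$, $B = \Gamma + v$, $C = \Gamma + v + w$, and using that $(\Gamma+v)\Delta(\Gamma+v+w) = ((\Gamma)\Delta(\Gamma+w)) + v$, I get the pointwise inclusion
\[(\Gamma + v + w)\Delta \Gamma \;\subseteq\; \bigl((\Gamma+v)\Delta \Gamma\bigr) \cup \Bigl(\bigl((\Gamma+w)\Delta \Gamma\bigr) + v\Bigr).\]

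Second, I would check that $D_R^+$ is invariant under translation by an integer vector $v$: this is immediate from the definition, since the substitution $y = x - v$ in the supremum over $x \in \R^n$ bijectively reparametrizes the centers, and $\card[B(y+v,R)] = \card[B(y,R)]$ because translation by $v \in \Z^n$ preserves integer-point counts inside an $\ell^\infty$-ball. Combined with the inclusion above and subadditivity of cardinality, this yields
\[D_R^+\bigl((\Gamma + v + w)\Delta \Gamma\bigr) \;\le\; D_R^+\bigl((\Gamma+v)\Delta \Gamma\bigr) + D_R^+\bigl((\Gamma+w)\Delta \Gamma\bigr)\]
for all $v, w \in \mathcal N_{\varep'} \subset \Z^n$ and all $R \ge 1$.

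Third, I would iterate this bound by induction on the number of summands: for any $k \le \ell$ and any $v_1, \dots, v_k \in \mathcal N_{\varep'}$,
\[D_R^+\!\Bigl(\bigl(\Gamma + \textstyle\sum_{i=1}^k v_i\bigr)\Delta \Gamma\Bigr) \;\le\; \sum_{i=1}^k D_R^+\bigl((\Gamma + v_i)\Delta \Gamma\bigr).\]
For $R \ge R_{\varep'}$, Definition~\ref{DefAlmPer} applied to each $v_i \in \mathcal N_{\varep'}$ gives $D_R^+((\Gamma + v_i)\Delta \Gamma) < \varep'$, so the right-hand side is strictly less than $k\varep' \le \ell\varep' = \varep$, which is exactly the conclusion.

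I do not anticipate a serious obstacle here: the only point requiring a moment of care is the translation invariance of $D_R^+$ (which holds precisely because we translate by integer vectors), and the slight bookkeeping that the bound $k\varep' \le \ell \varep'$ uses $k \le \ell$, which is built into the statement. The sums involved are finite, so no limit argument is needed beyond what already goes into the definitions of $D_R^+$ and $\mathcal N_{\varep'}$.
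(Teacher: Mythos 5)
Your proof is correct and takes essentially the same route as the paper's: both rely on the triangle inequality for symmetric differences to telescope $(\Gamma+\sum_{i=1}^k v_i)\Delta\Gamma$ into a sum of terms $(\Gamma+v_m)\Delta\Gamma$, then apply translation invariance of $D_R^+$ and the bound $k\varep'\le\ell\varep'=\varep$. The paper writes the telescope directly over the partial sums $\sum_{i=1}^m v_i$ rather than via a two-term step plus induction, but this is only a cosmetic difference.
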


\begin{proof}[Proof of Lemma \ref{arithProg}]
Let $\Gamma$ be an almost periodic pattern, $\varep>0$, $\ell\in\N$, $R_0>0$ and $\varep'=\varep/\ell$. Then there exists $R_{\varep'}>0$ such that
\[\forall R\ge R_{\varep'},\  \forall v\in\mathcal N_{\varep'},\  D_R^+\big( (\Gamma+v)\Delta \Gamma \big) <\varep'.\]
We then take $1\le k\le\ell$, $v_1,\cdots,v_k\in\mathcal N_{\varep'}$ and compute
\begin{align*}
D_R^+\Big( \big(\Gamma+\sum_{i=1}^k v_i\big)\Delta \Gamma \Big) & \le \sum_{m=1}^k D_R^+\Big( \big(\Gamma+\sum_{i=1}^m v_i\big)\Delta \big(\Gamma+\sum_{i=1}^{m-1} v_i\big) \Big)\\
             & \le \sum_{m=1}^k D_R^+\Big( \big((\Gamma+v_m)\Delta \Gamma\big) + \sum_{i=1}^{m-1} v_i \Big).
\end{align*}
By the invariance under translation of $D_R^+$, we deduce that
\begin{align*}
D_R^+\Big( \big(\Gamma+\sum_{i=1}^k v_i\big)\Delta \Gamma \Big) & \le \sum_{m=1}^k D_R^+ \big((\Gamma+v_m)\Delta \Gamma\big)\\
						 & \le k\varep'.
\end{align*}
As $k\le \ell$, this ends the proof.
\end{proof}

\begin{rem}\label{arithProg2}
In particular, this lemma implies that the set $\mathcal N_\varep$ contains arbitrarily large patches of lattices of $\R^n$: for every almost periodic pattern $\Gamma$, $\varep>0$ and $\ell\in\N$, there exists $\varep'>0$ such that for every $k_i \in \llbracket -\ell,\ell\rrbracket$ and every $v_1,\cdots,v_n\in\mathcal N_{\varep'}$, we have
\[\forall R\ge R_{\varep'},\  D_{R}^+\Big( \big(\Gamma+\sum_{i=1}^n k_iv_i \big)\Delta \Gamma \Big) <\varep.\]
\end{rem}

The second lemma is more technical. It expresses that given an almost periodic pattern $\Gamma$, a generic matrix $A\in O_n(\R)$ is non resonant with respect to $\Gamma$.

\begin{lemme}\label{passifacil}
Let $\Gamma\subset \Z^n$ be an almost periodic pattern with positive uniform density. Then the set of $A\in O_n(\R)$ such that $A(\Gamma)$ is equidistributed modulo $\Z^n$ is generic. More precisely, for every $\varep>0$, there exists an open and dense set of $A\in O_n(\R)$ such that there exists $R_0>0$ such that for every $R>R_0$, the projection on $\R^n/\Z^n$ of the uniform measure on $A(\Gamma\cap B_R)$ is $\varep$-close to Lebesgue measure on $\R^n/\Z^n$.
\end{lemme}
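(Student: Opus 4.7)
The plan is to verify equidistribution via Weyl's criterion and harness the almost-periodic structure of $\Gamma$ (in particular Remark~\ref{arithProg2}) to control the resulting exponential sums. Write $\mu_R^A$ for the projection to $\R^n/\Z^n$ of the normalized counting measure on $A(\Gamma\cap B_R)$; its $m$-th Fourier coefficient is
\[
\hat{\mu}_R^A(m) \;=\; \frac{1}{\card(\Gamma\cap B_R)}\sum_{x\in\Gamma\cap B_R}e^{2\pi i\langle A^T m, x\rangle}, \qquad m\in\Z^n.
\]
By Fourier approximation of continuous functions on $\R^n/\Z^n$ by trigonometric polynomials, for every $\varep>0$ there is a finite set $F_\varep\subset\Z^n\setminus\{0\}$ and an $\eta=\eta(\varep)>0$ such that $\mu_R^A$ is $\varep$-close to Lebesgue measure as soon as $|\hat{\mu}_R^A(m)|<\eta$ for every $m\in F_\varep$. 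My task is then to exhibit, for each such $(F_\varep,\eta)$, an open and dense set $\mathcal U_\varep\subset O_n(\R)$ and a radius $R_0$ on which this uniform smallness holds for all $A\in\mathcal U_\varep$ and $R\ge R_0$; genericity of the equidistributed $A$ follows by intersecting $\mathcal U_{1/k}$ over $k$.

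For \emph{density} I fix a nonzero $m\in F_\varep$ and argue that arbitrarily small perturbations of any $A\in O_n(\R)$ produce matrices with $\hat{\mu}_R^A(m)\to 0$. The idea is: given a small $\varep''>0$, Remark~\ref{arithProg2} lets me pick $n$ linearly independent $\varep''$-translations $v_1,\dots,v_n\in\mathcal N_{\varep''}$, after which $\Gamma$ is invariant up to uniform density $\varep''$ under every integer combination $\sum k_iv_i$ with $|k_i|\le\ell$, for $\ell$ as large as I wish. Partitioning $B_R$ into translates of a fundamental parallelepiped of $L=\bigoplus\Z v_i$ and summing, the exponential sum $\sum_{x\in\Gamma\cap B_R}e^{2\pi i\langle A^T m,x\rangle}$ factors, modulo an error of size $O(\varep'')$, as a bounded ``transversal'' sum times the geometric sum $\sum_{w\in L\cap B_R}e^{2\pi i\langle A^T m,w\rangle}$. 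The normalized modulus of the latter tends to zero precisely when $A^Tm\notin L^*$, the dual lattice. The set $\{A\in O_n(\R):A^Tm\in L^*\}$ is a proper real-analytic subvariety (cut out by the rational-value conditions $\langle A^Tm,v_i\rangle\in\Z$), so its complement is open and dense; intersecting over $m\in F_\varep$ and over a countable family of approximating lattices (one for each $\varep''=1/j$) yields a dense $G_\delta$ on which each $\hat{\mu}_R^A(m)$ goes to $0$.

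For \emph{openness} one must upgrade ``asymptotic decay'' to ``uniform smallness for all $R\ge R_0$''. The condition that $|\hat{\mu}_R^A(m)|<\eta$ holds for all $(m,R)$ in the \emph{finite} product $F_\varep\times[R_0,R_1]$ is plainly open in $A$, since each $\hat{\mu}_R^A(m)$ depends continuously on $A$. The key point, to be extracted from the almost-periodicity, is that when $R_1$ is chosen larger than the radius $R_{\varep'''}$ associated to a sufficiently small $\varep'''$, this finite-window smallness already forces smallness for every $R\ge R_0$: indeed, for $R\ge R_1$ the set $\Gamma\cap B_R$ can be tiled, up to a set of uniform density $O(\varep''')$, by translates of $\Gamma\cap B_{R_1}$ along vectors of $\mathcal N_{\varep'''}$, and the normalized Fourier coefficient changes by $O(\varep''')$ under such a reorganisation. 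Combined with the density statement from the previous paragraph, this produces the required open dense $\mathcal U_\varep$.

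The main obstacle I foresee is precisely this last propagation step. The naive continuity estimate $|\hat{\mu}_R^A(m)-\hat{\mu}_R^{A_0}(m)|\lesssim\|m\|\,\|A-A_0\|\,R$ degrades with $R$, so openness is not free: one must trade this $R$-dependence against the redundancy imposed by the almost periodic pattern, carefully matching the scale $R_1$ used to witness openness with the scale $\varep'''$ of the $\varep'''$-translations and with the lattice $L$ chosen in the density step. Making this trade quantitative, and keeping the constants compatible across the three choices of small parameter ($\varep,\varep'',\varep'''$), is where the bulk of the technical bookkeeping will sit.
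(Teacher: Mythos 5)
Your proposal is correct in outline and is essentially the paper's argument recast through Weyl's criterion. Both invoke Remark~\ref{arithProg2} to approximate the almost-periodic structure of $\Gamma$ by an honest lattice $L$ of near-translations, both perturb $A$ so that $AL$ equidistributes modulo $\Z^n$ (your condition ``$A^Tm\notin L^*$ for the finitely many $m\in F_\varep$'' is the dual reading of the density statement the paper uses to obtain~\eqref{machinrioa}), and both must then propagate a bound from a bounded window of radii to all large $R$ using almost periodicity. The only genuine difference is the bookkeeping device: you truncate to a finite frequency set $F_\varep$ and track exponential sums, while the paper works with a fixed translation-invariant metric $\dist$ on $\Prb(\R^n/\Z^n)$ satisfying a convexity inequality under averaging; these are interchangeable, and the ``transversal $\times$ geometric'' factorisation you describe is exactly what the paper's chain of $\dist$-estimates computes.

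Two remarks. First, ``$\{A\in O_n(\R):A^Tm\in L^*\}$ is a proper real-analytic subvariety'' is a priori a \emph{countable union} of such sets, but orthogonality forces $\|A^Tm\|=\|m\|$, so only the finitely many dual vectors on that sphere contribute; the claim stands once this is said. Second, for the propagation step that you correctly flag as the main obstacle: the cleanest way to close it, and what the paper's terse last sentence is implicitly doing, is to obtain the bound at scale $\ell R_0$ \emph{uniformly over all centers} $v\in\R^n$ (the $\sup_x$ built into $D_R^+$ makes the almost-tiling hypotheses~\eqref{pati1}--\eqref{pati3} available around every center), and then, for $R\gg\ell R_0$, tile $B_R$ by translates $B(v,\ell R_0)$ on an \emph{arbitrary} grid; the relative error is then $O(\ell R_0/R)$, and one avoids the circular dependence among $\ell$, the small parameters, and $R$ that threatens your version where the covering is restricted to translations taken from $\mathcal N_{\varep'''}$.
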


\begin{proof}[Proof of Lemma \ref{passifacil}]
During this proof, we consider a distance $\dist$ on $\Prb(\R^n/\Z^n)$ which is invariant under translations, where $\Prb(\R^n/\Z^n)$ denotes the space of probability Borel measures on $\R^n/\Z^n$ endowed with weak-* topology. We also suppose that this distance satisfies the following convexity inequality: if $\mu, \nu_1,\cdots,\nu_d\in\Prb(\R^n/\Z^n)$, then
\[\dist\left(\mu,\frac{1}{d}\sum_{i=1}^d \nu_i\right) \le \frac{1}{d}\sum_{i=1}^d \dist(\mu_,\nu_i).\]
For the simplicity of the notations, when $\mu$ and $\nu$ have not total mass 1, we will denote by $\dist(\mu,\nu)$ the distance between the normalizations of $\mu$ and $\nu$.

We consider the set $\mathcal U_\varep$ of matrices $A\in GL_n(\R)$ satisfying: there exists $R_0>0$ such that for all $R\ge R_0$,
\[\dist \left(\Leb_{\R^n/\Z^n}, \sum_{x\in B_R\cap \Gamma} \bar\delta_{Ax} \right) <\varep,\]
where $\bar\delta_x$\index{$\bar\delta$} is the Dirac measure of the projection of $x$ on $\R^n/\Z^n$. We show that for every $\varep>0$, $\mathcal U_\varep$ contains an open dense set. Then, the set $\bigcap_{\varep>0}\mathcal U_\varep$ will be a $G_\delta$ dense set made of matrices $A\in GL_n(\R)$ such that $A(\Gamma)$ is well distributed.

\begin{figure}[t]
\begin{center}
\begin{tikzpicture}[scale=.9]
\draw[fill=blue!20!white, thick] (-2.5,-2.5) rectangle (2.5,2.5);
\foreach\i in {-2,...,2}{
\foreach\j in {-2,...,2}{
\draw[fill=gray,opacity=.3] (0.95*\i-0.03*\j-.5,0.06*\i+1.02*\j-.5) rectangle (0.95*\i-0.03*\j+.5,0.06*\i+1.02*\j+.5) ;
\draw (0.95*\i-0.03*\j-.5,0.06*\i+1.02*\j-.5) rectangle (0.95*\i-0.03*\j+.5,0.06*\i+1.02*\j+.5) ;
}}
\draw (0,0) node {$\times$};
\draw[->] (0,0) to (0.95,0.06);
\draw (.5,.1) node[below right] {$v_1$};
\draw[->] (0,0) to (-0.03,1.02);
\draw (.1,.5) node[above left] {$v_2$};
\draw (0,0) node[below left]{$0$};
\end{tikzpicture}
\caption[``Almost tiling'' of $B_{\ell R_0}$]{``Almost tiling'' of $B_{\ell R_0}$ by cubes $B(\sum_{i=1}^n k_i v_i, R_0)$, with $-\ell\le k_i\le\ell$.}\label{AlmTil}
\end{center}
\end{figure}
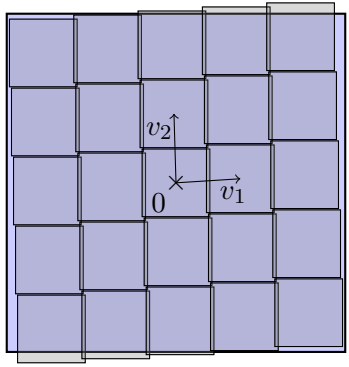

Let $\varep>0$, $\delta>0$, $\ell>0$ and $A\in GL_n(\R)$. We apply Remark~\ref{arithProg2} to obtain a parameter $R_0>0$ and a family $v_1,\cdots,v_n$ of $\varep$-translations of $\Gamma$ such that the family of cubes $\big(B(\sum_{i=1}^n k_i v_i, R_0)\big)_{-\ell\le k_i\le \ell}$ is an ``almost tiling'' of $B_{\ell R_0}$ (in particular, each $v_i$ is close to the vector having $2R_0$ in the $i$-th coordinate and $0$ in the others, see Figure~\ref{AlmTil}):
\begin{enumerate}[(1)]
\item\label{pati1} this collection of cubes fills almost all $B_{\ell R_0}$:
\[\frac{\card\Big(\Gamma \cap \big(\bigcup_{-\ell\le k_i\le \ell}B(\sum_{i=1}^n k_i v_i, R_0) \Delta B_{\ell R_0}\big)\Big)}{\card(\Gamma \cap B_{\ell R_0})} \le \varep;\]
\item\label{pati2} the overlaps of the cubes are not too big: for all collections $(k_i)$ and $(k'_i)$ such that $-\ell \le k_i,k'_i \le \ell$, 
\[\frac{\card\Big(\Gamma \cap \big(B(\sum_{i=1}^n k_i v_i, R_0)\Delta B(\sum_{i=1}^n k'_i v_i, R_0)\big)\Big)}{\card(\Gamma \cap B_{\ell R_0})} \le \varep;\]
\item\label{pati3} the vectors $\sum_{i=1}^n k_i v_i$ are translations for $\Gamma$: for every collection $(k_i)$ such that $-\ell \le k_i \le \ell$, 
\[\frac{\card\Big(\big(\Gamma \Delta (\Gamma-\sum_{i=1}^n k_i v_i)\big)\cap B_{R_0}\Big)}{\card(\Gamma \cap B_{R_0})} \le \varep.\]
\end{enumerate}

Increasing $R_0$ and $\ell$ if necessary, there exists $A'\in GL_n(\R)$ (respectively $SL_n(\R)$, $O_n(\R)$) such that $\|A-A'\|\le\delta$ and that we have
\begin{equation}\label{machinrioa}
\dist\left(\Leb_{\R^n/\Z^n}, \sum_{-\ell\le k_i\le\ell}\bar\delta_{A'(\sum_{i=1}^n k_i v_i)}\right) \le \varep.
\end{equation}
Indeed, if we denote by $\Lambda$ the lattice spanned by the vectors $v_1,\cdots,v_n$, then the set of matrices $A'$ such that $A'\Lambda$ is equidistributed modulo $\Z^n$ is dense in $GL_n(\R)$ (respectively $SL_n(\R)$ and $O_n(\R)$).

Then, we have,
\begin{align*}
\dist\Bigg(\Leb_{\R^n/\Z^n}, & \sum_{\substack{-\ell \le k_i \le \ell\\ x\in \Gamma \cap B(\sum_{i=1}^n k_i v_i, R_0)}} \bar\delta_{A'x}\Bigg) \le\\
  & \dist\Bigg(\Leb_{\R^n/\Z^n}, \sum_{\substack{-\ell \le k_i \le \ell\\ x\in \Gamma \cap B(0, R_0)}} \bar\delta_{A'(\sum_{i=1}^n k_i v_i)+ A'x}\Bigg)\\
  & + \dist\Bigg(\sum_{\substack{-\ell \le k_i \le \ell\\ x\in \Gamma \cap B(0, R_0)}} \bar\delta_{A'(\sum_{i=1}^n k_i v_i)+ A'x}, \sum_{\substack{-\ell \le k_i \le \ell\\ x\in \Gamma \cap B(\sum_{i=1}^n k_i v_i, R_0)}} \bar\delta_{A'x}\Bigg)
\end{align*}
By the property of convexity of $\dist$, the first term is smaller than
\[ \frac{1}{\card\big(\Gamma \cap B(0, R_0)\big)}\sum_{x\in \Gamma \cap B(0, R_0)}\dist\Bigg(\Leb_{\R^n/\Z^n}, \sum_{-\ell \le k_i \le \ell} \bar\delta_{A'(\sum_{i=1}^n k_i v_i)+ A'x}\Bigg);\]
by Equation~\eqref{machinrioa} and the fact that $\dist$ is invariant under translation, this term is smaller than $\varep$. As by hypothesis, the vectors $\sum_{i=1}^n k_i v_i$ are $\varep$-translations of $\Gamma$ (hypothesis \eqref{pati3}), the second term is also smaller than $\varep$. 
Thus, we get 
\[\dist\Bigg(\Leb_{\R^n/\Z^n}, \sum_{\substack{-\ell \le k_i \le \ell\\ x\in \Gamma \cap B(\sum_{i=1}^n k_i v_i, R_0)}} \bar\delta_{A'x}\Bigg) \le 2\varep\]

By the fact that the family of cubes $\big(B(\sum_{i=1}^n k_i v_i, R_0)\big)_{-\ell\le k_i\le \ell}$ is an almost tiling of $B_{\ell R_0}$ (hypotheses \eqref{pati1} and \eqref{pati2}), we get, for every $v\in \R^n$,
\[ \dist\left(\Leb_{\R^n/\Z^n}, \sum_{x\in \Gamma \cap B_{\ell R_0}} \bar\delta_{A'x}\right) < 4\varep.\]
Remark that we can suppose that this remains true on a whole neighbourhood of $A'$. We use the fact that $\Gamma$ is an almost periodic pattern to deduce that $A'$ belongs to the interior of $\mathcal U_\varep$.
\end{proof}

\section{Proof of Theorem \ref{imgquasi}}\label{TrouVentre}

\begin{notation}\label{intelligent}
For $A\in GL_n(\R)$, we denote $A=(a_{i,j})_{i,j}$. We denote by $I_\Q(A)$\index{$I_\Q(A)$} the set of indices $i$ such that $a_{i,j}\in\Q$ for every $j\in\llbracket 1,n\rrbracket$
\end{notation}

The proof of Theorem~\ref{imgquasi} relies on the following remark:

\begin{rem}\label{pmtriv}
If $a\in\Q$, then there exists $q\in\N^*$ such that $\{ax\mid x\in\Z\}\subset \frac{1}{q}\Z$. On the contrary, if $a\in\R\setminus\Q$, then the set $\{ax\mid x\in\Z\}$ is equidistributed in $\R/\Z$.
\end{rem}

Thus, in the rational case, the proof will lie in an argument of periodicity. On the contrary, in the irrational case, the image $A(\Z^n)$ is equidistributed modulo $\Z^n$: on every large enough domain, the density does not move a lot when we perturb the image set $A(\Z^n)$ by small translations. This reasoning is formalized by Lemmas~\ref{tiroir} and \ref{équi}. 

More precisely, for $R$ large enough, we would like to find vectors $w$ such that $D^+_R\big((\pi(A\Gamma) +w)\Delta \pi(A\Gamma)\big)$ is small. We know that there exists vectors $v$ such that $D^+_R\big((\Gamma+v)\Delta\Gamma\big)$ is small; this implies that $D^+_R\big((A\Gamma+Av)\Delta A\Gamma\big)$ is small, thus that $D^+_R\big(\pi(A\Gamma+Av)\Delta \pi(A\Gamma)\big)$ is small. The problem is that in general, we do not have $\pi(A\Gamma+Av) = \pi(A\Gamma)+\pi(Av)$. However, this is true if we have $Av\in\Z^n$. Lemma~\ref{tiroir} shows that in fact, it is possible to suppose that $Av$ ``almost'' belongs to $\Z^n$, and Lemma~\ref{équi} asserts that this property is sufficient to conclude.

The first lemma is a consequence of the pigeonhole principle.

\begin{lemme}\label{tiroir}
Let $\Gamma\subset \Z^n$ be an almost periodic pattern, $\varep>0$, $\delta>0$ and $A\in GL_n(\R)$. Then we can suppose that the elements of $A(\mathcal N_\varep)$ are $\delta$-close to $\Z^n$. More precisely, there exists $R_{\varep,\delta}>0$ and a relatively dense set $\widetilde{\mathcal N}_{\varep,\delta}$\index{$\widetilde{\mathcal N}_{\varep,\delta}$} such that 
\[\forall R\ge R_{\varep,\delta},\  \forall v\in\widetilde{\mathcal N}_{\varep,\delta},\  D_R^+\big( (\Gamma+v)\Delta \Gamma \big) <\varep,\]
and that for every $v\in\widetilde{\mathcal N}_{\varep,\delta}$, we have $d_\infty(Av,\Z^n)<\delta$. Moreover, we can suppose that for every $i\in I_\Q(A)$ and every $v\in\widetilde{\mathcal N}_{\varep,\delta}$, we have $(Av)_i\in \Z$.
\end{lemme}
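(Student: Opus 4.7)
The plan is to apply the pigeonhole principle to a finite patch of a lattice built from elements of $\mathcal N_{\varep'}$ for some $\varep' \ll \varep$, separating the rational and irrational coordinates of $A$.

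First, by Remark~\ref{pmtriv} there exists $q \in \N^*$ such that $(Av)_i \in \frac{1}{q}\Z$ for every $v \in \Z^n$ and every $i \in I_\Q(A)$. Consequently the sublattice
\[\Lambda_\Q := \{v \in \Z^n \mid (Av)_i \in \Z \text{ for every } i \in I_\Q(A)\}\]
of $\Z^n$ has index at most $q^{|I_\Q(A)|}$, and restricting the search for $\widetilde{\mathcal N}_{\varep,\delta}$ to $\Lambda_\Q$ yields the exact integrality in the rational coordinates for free.

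Second, fix a large integer $\ell$ (to be chosen in terms of $\delta$, $n$, $q$ and $|I_\Q(A)|$), and apply Remark~\ref{arithProg2} to obtain $\varep' > 0$ such that any integer combination $\sum_{i=1}^n k_i v_i$ with $v_1, \ldots, v_n \in \mathcal N_{\varep'}$ and $|k_i| \le 2\ell$ is an $\varep$-translation of $\Gamma$. Pick $v_1, \ldots, v_n \in \mathcal N_{\varep'}$ linearly independent (possible since the relatively dense set $\mathcal N_{\varep'}$ cannot lie in a hyperplane), and set
\[\Lambda_0 := \Bigl\{k \in \Z^n \Bigm| \sum_{i=1}^n k_i v_i \in \Lambda_\Q\Bigr\},\]
a sublattice of $\Z^n$ of index at most $q^{|I_\Q(A)|}$ containing at least $(2\ell+1)^n/q^{|I_\Q(A)|}$ elements in $[-\ell,\ell]^n$. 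Consider the group homomorphism
\[\psi : \Lambda_0 \longrightarrow (\R/\Z)^{n - |I_\Q(A)|}, \qquad k \longmapsto \Bigl(A\sum_{i=1}^n k_i v_i \bmod \Z^n\Bigr)_{i \notin I_\Q(A)},\]
and partition the target torus into $\lceil 2/\delta\rceil^{n-|I_\Q(A)|}$ cubes of side $\delta/2$. For $\ell$ large enough, pigeonhole produces two distinct $k, k' \in \Lambda_0 \cap [-\ell,\ell]^n$ with $\psi(k), \psi(k')$ in a common cube; then $k^{\ast} := k - k'$ lies in $\Lambda_0 \cap [-2\ell,2\ell]^n$ and $\psi(k^{\ast})$ has torus $\ell^\infty$-norm less than $\delta$. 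Consequently $v^{\ast} := \sum_i k^{\ast}_i v_i$ belongs to $\mathcal N_\varep$ by the choice of $\varep'$, satisfies $(Av^{\ast})_i \in \Z$ for every $i \in I_\Q(A)$, and $d_\infty(Av^{\ast}, \Z^n) < \delta$.

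Finally, to promote this single good vector to a relatively dense set, I would argue that $\psi^{-1}(B_\delta(0))$ is a Bohr neighborhood of the identity in $\Lambda_0$, hence syndetic in $\Z^n$ by the standard compactness argument on the closure of $\psi(\Lambda_0)$ in the torus; combined with the approximate additive structure of $\mathcal N_{\varep'}$ furnished by Lemma~\ref{arithProg}, this produces the relatively dense $\widetilde{\mathcal N}_{\varep,\delta}$. I expect this last step to be the principal obstacle: the pigeonhole directly yields only one good vector of bounded norm, so upgrading to relative density throughout $\R^n$ requires either reapplying the construction with a basepoint depending on the region considered, or invoking Bohr set syndeticity to guarantee a uniform bound on the gap.
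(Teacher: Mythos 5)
Your construction of a single good vector $v^{\ast}$ is sound and is in the same spirit as the paper's: both partition the torus into $O(\delta^{-n})$ small cubes and pigeonhole a finite patch of translations against it, and both handle the rational rows of $A$ by passing to a finite-index sublattice so that those coordinates land exactly in $\Z$. You are also right that promoting one good vector to a relatively dense family is the real obstacle, and your option (b), syndeticity of the Bohr set $\psi^{-1}(B_\delta(0))$ in $\Lambda_0$, does not close the gap. Syndeticity in $\Lambda_0$ produces, near any prescribed region, a coefficient vector $k\in\Z^n$ with $\psi(k)$ small, but these $k$ have unbounded entries; the property that $\sum_i k_i v_i$ is an $\varep$-translation of $\Gamma$, furnished by Remark~\ref{arithProg2} after fixing $\varep'=\varep/\ell$, is only available for $|k_i|\le\ell$ with $\ell$ chosen in advance. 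So the Bohr set gives you vectors with $A(\sum k_iv_i)$ close to $\Z^n$ everywhere, but with no control that these remain $\varep$-translations of $\Gamma$ once the $k_i$ leave the box $[-\ell,\ell]^n$.

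Your option (a), reapplying the construction with a basepoint depending on the region, is the right one, and the paper implements it via a compactness step you are missing. Set $P = \{Av\bmod\Z^n : v\in\mathcal N_{\varep'}\}$ and $P_R = \{Av\bmod\Z^n : v\in\mathcal N_{\varep'}\cap B_R\}$. Since $\bigcup_R P_R = P$ and $\R^n/\Z^n$ is compact, there is $R_0$ with $d_H(P,P_{R_0})<\delta$: for \emph{every} $v\in\mathcal N_{\varep'}$ there is $v'\in\mathcal N_{\varep'}\cap B_{R_0}$ with $d_\infty(Av-Av',\Z^n)<\delta$. The pigeonhole is then applied not to lattice combinations but to the one-parameter family $\{A(kv')\}_{0\le k\le\lceil 1/\delta\rceil^n}$, giving $k_{v'}\le\ell$ with $d_\infty(A(k_{v'}v'),\Z^n)\le\delta$; the corrector $v''=(k_{v'}-1)v'$ lies in $B_{\ell R_0}$, lies in $\mathcal N_\varep$ by Lemma~\ref{arithProg}, and satisfies $d_\infty(A(v+v''),\Z^n)\le 2\delta$. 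The set $\{v+v'' : v\in\mathcal N_{\varep'}\}$ is a bounded-by-$\ell R_0$ perturbation of the relatively dense set $\mathcal N_{\varep'}$, hence relatively dense, and each $v+v''$ is a $2\varep$-translation. Replace your Bohr-set step with this Hausdorff approximation of $P$ by $P_{R_0}$ and the proof goes through.
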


The second lemma states that in the irrational case, we have continuity of the density under perturbations by translations.

\begin{lemme}\label{équi}
Let $\varep>0$ and $A\in GL_n(\R)$. Then there exists $\delta>0$ and $R_0>0$ such that for all $w\in B_\infty(0,\delta)$ (such that for every $i\in I_\Q(A)$, $w_i=0$), and for all $R\ge R_0$, we have
\[D_R^+\big(\pi(A\Z^n) \Delta \pi(A\Z^n+w) \big) \le \varep.\]
\end{lemme}

\begin{rem}\label{RemContTrans}
When $I_\Q(A) = \emptyset$, and in particular when $A$ is totally irrational (see Definition~\ref{TotIrrat}), the map $v\mapsto \tau(A+v)$ is continuous in 0; the same proof as that of this lemma implies that this function is globally continuous.
\end{rem}

We begin by the proofs of both lemmas, and prove Theorem~\ref{imgquasi} thereafter.

\begin{proof}[Proof of Lemma \ref{tiroir}]
Let us begin by giving the main ideas of the proof of this lemma. For $R_0$ large enough, the set of remainders modulo $\Z^n$ of vectors $Av$, where $v$ is a $\varep$-translation of $\Gamma$ belonging to $B_{R_0}$, is close to the set of remainders modulo $\Z^n$ of vectors $Av$, where $v$ is any $\varep$-translation of $\Gamma$. Moreover (by the pigeonhole principle), there exists an integer $k_0$ such that for each $\varep$-translation $v\in B_{R_0}$, there exists $k\le k_0$ such that $A(k v)$ is close to $\Z^n$. Thus, for every $\varep$-translation $v$ of $\Gamma$, there exists a $(k_0-1)\varep$-translation $v' = (k-1)v$, belonging to $B_{k_0 R_0}$, such that $A(v+v')$ is close to $\Z^n$. The vector $v+v'$ is then a $k_0\varep$-translation of $\Gamma$ (by additivity of the translations) whose distance to $v$ is smaller than $k_0 R_0$.
\medskip

We now formalize these remarks. Let $\Gamma$ be an almost periodic pattern, $\varep>0$ and $A\in GL_n(\R)$. First of all, we apply the pigeonhole principle. We partition the torus $\R^n/\Z^n$ into squares whose sides are smaller than $\delta$; we can suppose that there are at most  $\lceil 1/\delta\rceil^n$ such squares. For $v\in \R^n$, we consider the family of vectors $\{A(kv)\}_{0\le k\le \lceil 1/\delta\rceil^n}$ modulo $\Z^n$. By the pigeonhole principle, at least two of these vectors, say $A(k_1v)$ and $A(k_2v)$, with $k_1<k_2$, lie in the same small square of $\R^n/\Z^n$. Thus, if we set $k_v = k_2-k_1$ and $\ell = \lceil 1/\delta\rceil^n$, we have
\begin{equation}\label{eqdistZ}
1\le k_v\le \ell \quad \text{and} \quad d_\infty\big(A(k_vv),\Z^n\big)\le\delta.
\end{equation}
To obtain the conclusion in the rational case, we suppose in addition that $v\in q\Z^n$, where $q\in\N^*$ is such that for every $i\in I_\Q(A)$ and every $1\le j\le n$, we have $q\, a_{i,j}\in\Z$ (which is possible by Remark~\ref{arithProg2}).

We set $\varep'=\varep/\ell$. By the definition of an almost periodic pattern, there exists $R_{\varep'}>0$ and a relatively dense set ${\mathcal N}_{\varep'}$ such that Equation \eqref{EqAlmPer} holds for the parameter $\varep'$:
\begin{equation}\label{EqAlmPer3}\tag{\ref{EqAlmPer}'}
\forall R\ge R_{\varep'},\  \forall v\in\mathcal N_{\varep'},\  D_R^+\big( (\Gamma+v)\Delta \Gamma \big) <\varep',
\end{equation}

We now set
\[P = \big\{Av\operatorname{mod} \Z^n \mid v\in {\mathcal N}_{\varep'}\big\} \quad \text{and} \quad P_R = \big\{Av\operatorname{mod} \Z^n \mid v\in \mathcal N_{\varep'}\cap B_R\big\}.\]
We have $\bigcup_{R>0} P_R = P$, so there exists $R_0>R_{\varep'}$ such that $d_H(P,P_{R_0})<\delta$ (where $d_H$\index{$d_H$} denotes Hausdorff distance). Thus, for every $v\in\mathcal N_{\varep'}$, there exists $v'\in \mathcal N_{\varep'}\cap B_{R_0}$ such that
\begin{equation}\label{eq666}
d_\infty(Av-Av',\Z^n)<\delta.
\end{equation}

We then remark that for every $v'\in {\mathcal N}_{\varep'}\cap B_{R_0}$, if we set $v'' = (k_{v'}-1)v'$, then by Equation \eqref{eqdistZ}, we have
\[d_\infty(Av' + Av'',\Z^n) = d_\infty\big(A(k_{v'}v'),\Z^n\big)\le\delta.\]
Combining this with Equation~\eqref{eq666}, we get
\[d_\infty(Av + Av'',\Z^n)\le 2\delta,\]
with $v''\in B_{\ell R_0}$. 

On the other hand, $k_{v'}\le \ell$ and Equation \eqref{EqAlmPer3} holds, so Lemma \ref{arithProg} (more precisely, Remark~\ref{arithProg2}) implies that $v''\in \mathcal N_\varep$, that is
\[\forall R\ge R_{\varep'},\  D_{R}^+\big( (\Gamma+ v'')\Delta \Gamma \big) <\varep.\]

In other words, for every $v\in\mathcal N_{\varep'}$, there exists $v''\in \mathcal N_\varep \cap B_{\ell R_0}$ (with $\ell$ and $R_0$ independent from $v$) such that $d_\infty\big(A(v+v''),\Z^n\big)<2\delta$. The set $\widetilde{\mathcal N}_{2\varep,2\delta}$ we look for is then the set of such sums $v+v''$.
\end{proof}

\begin{proof}[Proof of Lemma \ref{équi}]
Under the hypothesis of the lemma, for every $i\notin I_\Q(A)$, the sets
\[\left\{\sum_{j=1}^n a_{i,j} x_j\mid (x_j)\in\Z^n\right\},\]
are equidistributed modulo $\Z$. Thus, for all $\varep>0$, there exists $R_0>0$ such that for every $R\ge R_0$,
\[D_R^+\big\{v\in\Z^n \,\big|\, \exists i\notin I_\Q(A) : d\big((Av)_i,\Z+\frac12\big)\le \varep\big\} \le 2(n+1)\varep.\]
As a consequence, for all $w\in\R^n$ such that $\|w\|_\infty\le\varep/(2(n+1))$ and that $w_i=0$ for every $i\in I_\Q(A)$, we have
\[D_R^+\big(\pi(A\Z^n) \Delta \pi(A(\Z^n+w))\big)\le\varep.\]
Then, the lemma follows from the fact that there exists $\delta>0$ such that $\|A(w)\|_\infty\le \varep/(2(n+1))$ as soon as $\|w\|\le\delta$.
\end{proof}

\begin{proof}[Proof of Theorem \ref{imgquasi}]
Let $\varep>0$. Lemma \ref{équi} gives us a corresponding $\delta>0$, that we use to apply Lemma \ref{tiroir} and get a set of translations $\widetilde{\mathcal N}_{\varep,\delta}$. Then, for every $v\in \widetilde{\mathcal N}_{\varep,\delta}$, we write $\pi(Av) = Av + \big(\pi(Av)-Av\big) = Av + w$. The conclusions of Lemma~\ref{tiroir} imply that $\|w\|_\infty <\delta$, and that $w_i=0$ for every $i\in I_\Q(A)$.

We now explain why $\hat Av = \pi(Av)$ is a $\varep$-translation for the set $\widehat A(\Gamma)$. Indeed, for every $R\ge \max(R_{\varep,\delta},MR_0)$, where $M$ is the maximum of the greatest modulus of the eigenvalues of $A$ and of the greatest modulus of the eigenvalues of $A^{-1}$, we have
\begin{align*}
D^+_R \Big(\pi(A\Gamma) \Delta \big(\pi(A \Gamma)+\widehat Av\big)\Big) \le &\ D^+_R \Big(\pi(A\Gamma) \Delta \big(\pi(A \Gamma)+w\big)\Big)\\
                  & + D^+_R \Big(\big(\pi(A\Gamma) + w\big) \Delta \big(\pi(A \Gamma)+\widehat Av\big)\Big)
\end{align*}
(where $w=\pi(Av)-Av$). By Lemma \ref{équi}, the first term is smaller than $\varep$. For its part, the second term is smaller than
\[D^+_R\big((A\Gamma + Av) \Delta A \Gamma\big) \le M^2 D^+_{RM}\big((\Gamma + v) \Delta \Gamma\big),\]
which is smaller than $\varep$ because $v\in\mathcal N_\varep$.
\end{proof}

\bibliographystyle{amsalpha}
\bibliography{../../Biblio}

\end{document}